\newtheorem{theorem2}{Theorem}
\newtheorem{lemma}{Lemma}
\begin{document}

\newcommand\colred{}

\begin{center}
{\bf\large Testing for Heteroscedasticity in High-dimensional Regressions}\\[2mm]
Zhaoyuan Li and Jianfeng Yao\\
Department of Statistics and Actuarial Science\\
The University of Hong Kong\\
\end{center}

\begin{abstract}
Testing heteroscedasticity of the errors is a major challenge in high-dimensional regressions where the number of covariates  is large compared to the sample size. Traditional procedures  such as the  White and the Breusch-Pagan tests typically suffer from low sizes and powers. This paper proposes two new test procedures based on standard OLS residuals.  Using the theory of random Haar orthogonal matrices, the asymptotic normality of both test statistics is obtained under the null when the degrees of freedom tend to infinity. This encompasses both the classical low-dimensional setting where the number of variables is fixed while the sample size tends to infinity, and the proportional high-dimensional setting where these dimensions grow to infinity proportionally.  These procedures thus offer a wide coverage of dimensions in applications.   To our best knowledge, this is the first procedures in the literature for testing heteroscedasticity which are valid for medium and high-dimensional regressions. The superiority of our proposed tests over the existing methods are demonstrated by extensive simulations and by several real data analyses as well. 

\smallskip
\noindent \textbf{Keywords.} 
Breusch and Pagan test,
White's test,
heteroscedasticity,
high-dimensional regression,
hypothesis testing,
Haar matrix.
\end{abstract}

\section{Introduction}
Consider the linear regression model
\begin{eqnarray}\label{model}
y_i=X_i\boldsymbol{\beta} +\varepsilon_i,  \quad i=1,\ldots, n,
\end{eqnarray}
where $y_i$ is the dependent variable, $X_i$ is a $1\times p$ vector of regressors, $\boldsymbol{\beta}$ is the $p$-dimensional coefficient vector, and {\colred{} the error $\varepsilon_i = \sigma_i \eta_i$, where $\sigma_i$ could depend on covariates $X_i$, and $\{\eta_i\}$ are independent standard normal distributed.} A significant part of the inference theory for the model is based on the assumption that the errors $\{\varepsilon_i\}$ are homoscedastic, i.e. under the hypothesis
\begin{eqnarray}\label{hypo}
H_0: \sigma_1^2=\ldots =\sigma_n^2=\sigma^2,
\end{eqnarray}
for some constant $\sigma^2 >0$, { \colred{} that is, the unconditional and conditional variances of the noise coincide and are independent of the covariates.} However, this assumption cannot be always guaranteed in practice, and it is well known that heteroscedasticity of the error variance leads to inefficient parameter estimates and inconsistent covariance estimates. We consider testing the hypothesis in (\ref{hypo}) when { \colred{} the number of covariates $p$ goes to infinity together with the sample size $n$.}

Studying this testing problem is also motivated by recent advances in the estimation of high-dimensional regressions. In this paper, we consider testing the hypothesis in (2) when the number of covariates $p$ can be large with respect to the sample size $n$. High-dimensional regressions become vital due to the increasingly wide availability of data sets with a large number of variables in empirical economics, finance \citep{belloni2014high} and biology \citep{daye2012high}. For example, the American Housing Survey records prices as well as a multitude of features of the house sold; scanner datasets record prices and numerous characteristics of products sold at a store or on the internet \citep{belloni2011inference}; and text data frequently lead to counts of words in documents from a large dictionary \citep{taddy2013multinomial}. Importantly, heteroscedasticity is possible in such data sets. Among the existing methods for high-dimensional regressions, one approach assumes sparsity where the number of important regressors is much smaller than $n$ or $p$. For example, \cite{belloni2012sparse} and \cite{belloni2014pivotal} studied the estimation problem with heteroscedasticity and proposed the heteroscedastic form of Lasso and square-root Lasso methods, respectively. However, if the errors are homoscedastic, these heteroscedasticity-consistent methods may lose efficiency as suggested by the phenomenon arising in low-dimensional regressions. Here, we conduct a small simulation study with 5000 replications to illustrate this point by using data generated according to Model 1 in Section 3 with $p=100, n=250, \boldsymbol{\beta} =(\mathbf{1}_{50}^\prime,\mathbf{0}_{50}^\prime)^\prime$. {\colred{}The ratio of the heteroscedastic form of Lasso estimator (root mean squared error) over the OLS estimator (root mean squared error)} is 0.8791 when the errors are related to 10 regressors. But the ratio is 12.08 for homoscedastic errors, and 7.53 when the errors are related to only one regressor.   \cite{el2013robust} and \cite{bean2013optimal} stated that the Lasso-type of methods result in biased estimates of the coefficients, and the least squares method is preferable to other M-estimators in high-dimensional regression under homoscedasticity. \cite{bean2013optimal} proposed an optimal least square algorithm with the assumption that the error is homoscedastic with a known distribution. However, the performance of this optimal algorithm is largely unknown if the error is in fact heteroscedastic. In summary,  the discussion above on two recent high-dimensional estimation methodologies highlights the importance of conducting heteroscedasticity detection as a preliminary step in practice in order to select a suitable estimation method for high-dimensional regressions.

Heteroscedastic testing has been extensively studied for classical low-dimensional regressions in the literature. Many popular tests examine whether the estimated residuals are correlated with some covariates or any auxiliary variables that would be useful in explaining the departure from homoscedasticity, see for example \citet{breusch1979simple}, \citet{white1980heteroskedasticity}, \citet{cook1983diagnostics}, \citet{azzalini1993use}, \citet{diblasi1997testing}, and \citet{su2013}.
These tests, however, will not have much power if the existing heteroscedasticity is not strongly related to either the chosen auxiliary variables or covariates. In consequence, many nonparametric test procedures are thus proposed to avoid such potential model misspecification, see for example, \citet{eubank1993detecting} and \citet{dette1998testing}. \citet{koenker1982robust} and \citet{newey1987asymmetric} proposed to test heteroscedasticity by comparing different quantile or expectile estimates. Their approach is much preferable to many other tests for heavy tailed errors \citep{lee1992heteroskedasticity}. However, there is some difficulty in applying this approach because no clear criterion exists for selecting the used quantiles.

Testing the homoscedasticity hypothesis (2) becomes very challenging for high-dimensional regressions. The large sample theory of all the existing tests discussed above is developed under the {\it low-dimensional} framework where the dimension $p$ should be fixed while the sample size tends to infinity. By referring to recent advances in high-dimensional statistics \citep{paul2014random, yao2015large}, it clearly appears that these test methods are not suitable for analysing data sets where the number of variables is not $``$small enough" compared to the sample size. For example, the limiting $\chi_{p(p+1)/2}^2$ approximation for White's test statistic is typically misleading even for a moderate dimension $p=25$ while the sample size is $n=500$ (see Table 1 for more details). As an additional illustration, many published Monte Carlo studies of tests for heteroscedasticity have used very low-dimensional designs and the error variances are determined by a single variable in the alternative model, see for example \citet{dette1998testing}. \citet{godfrey1999robustness} and \citet{godfrey1996some} showed that the results obtained from very simple experimental designs (for example $p=1$) may be an unreliable guide to finite sample performance with a moderately large number of variables. Another illustration of high-dimensional effect is an interesting phenomenon shown in \citet{ferrari2002corrected} and \citet{godfrey1999robustness} where the actual size of many popular tests stays far from the nominal level for the moderately large sample size $n$. Therefore, accurate and powerful test procedure is an urgent need for detecting heteroscedasticity in a high-dimensional regression.

In this paper, we propose two new procedures for testing heteroscedasticity, which are {\it dimension-proof} in the sense that they are valid for a wide range of dimension (covering both low and high-dimensional settings). More precisely, our procedures are theoretically valid once the degree of freedom $n-p$ is large enough (precisely when $n-p\to \infty$). This includes for instance the low-dimensional setting where $p\ll n$ and the high-dimensional situation where $p$ and $n$ grow to infinity proportionally such that $p\propto cn$ with $0<c<1$. Simulation experiments reported in Section 3 show that the proposed tests outperform the popular existing methods for medium or high-dimensional regressions. More surprisingly, even in low-dimensional setting, our procedures perform better than these classical procedures.

The paper is organised as follows. The main results of the paper are reported in Section 2. Two new tests are here proposed using the residuals of a least squares fit. Section 3 reports several simulation experiments to assess the finite sample performance of the proposed tests and compare them to the existing ones. In Section 4 we apply the suggested procedures to analyse four real data sets. All technical proofs of the results presented in Section 2 are relegated to the Appendix.

\section{Main results}
The following assumptions will be used in our set-up of the regression model (\ref{model}):
\begin{itemize}
\item Assumption (a): The errors are independent and normal distributed: $\varepsilon_i \sim N(0, \sigma_i^2), i=1, \ldots, n$;
\item {\colred{} Assumption (b): In
  the $n\times p$ design matrix $\mathbf{X}=(X_1^\prime, \ldots,
  X_n^\prime)'$, $\{X_i\}_{1\leq i \leq n}$ are independent normal distributed vectors $N(\mathbf{0, \Sigma})$ with mean 0 and covariance matrix $\mathbf{\Sigma}$;}
\item Assumption (c): As $n\to \infty$, the degree of freedom $k=k(n) :=n-p \to \infty$;
\item Assumption (d): In addition to Assumption (c), $\displaystyle \liminf_{k, n} c_n >0$, where $c_n =\frac{k}{n}$.
\end{itemize}
Both Assumptions (a) and (b) are classical in a regression model. Assumptions (c) and (d) define the asymptotic setting of the paper which is quite general. In particular, the setting includes the situation where both $p$ and $n$ are large while remaining comparable, i.e. for some $0<c<1$, $p \simeq c\cdot n$ and $k \simeq (1-c)\cdot n$. Meanwhile, the setting encompasses the classical low-dimensional situation where $p$ is a constant and $n\to \infty$. {\colred{} Therefore, the procedure derived under this setting will be applicable to both the high and low-dimensional settings. It is however noted that since our methods will use the OLS residuals, it is required that $p<n$ although both dimensions can grow to infinity.}

In the regression model (\ref{model}) and under homoscedasticity, the parameter vector $\boldsymbol{\beta}$  is estimated by the OLS estimator $\widehat{\boldsymbol{\beta}}_0=(\mathbf{X}^\prime \mathbf{X})^{-1} \mathbf{X}^\prime Y$ where $Y=(y_1, \ldots, y_n)^\prime$ and $\mathbf{X}= (X_1^\prime, \ldots, X_n^\prime)^\prime$.
Then, the vector of residuals is
\begin{eqnarray}\label{error_h1}
\hat{\boldsymbol{\varepsilon}} = Y-\mathbf{X}\widehat{\boldsymbol{\beta}}_0=\mathbf{Q}_x \boldsymbol{\varepsilon}, \quad \textrm{with}\ \mathbf{Q}_x=\mathbf{I}_n - \mathbf{X}(\mathbf{X}'\mathbf{X})^{-1}\mathbf{X}^\prime.
\end{eqnarray}
Here and throughout of the paper, $\mathbf{I}_n$ denotes the $n$-th order identity matrix. Notice that $\mathbf{Q}_x$ is a projection matrix of rank $k=n-p$.
In the following, two test statistics are proposed based on the residuals $\hat{\boldsymbol{\varepsilon}}=\{\hat{\varepsilon}_i\}$.

{\colred{}Note that each covariate vector $X_i\sim N(\mathbf{0, \Sigma})$ so we have $X_i = \mathbf{\Sigma}^{1/2} Z_i$ where $Z_i\sim N(\mathbf{0,I}_p)$. Let $\mathbf{Z} = (Z'_1, \ldots, Z'_n)$ be the corresponding $``$design" matrix. Then we have $\mathbf{X}(\mathbf{X}'\mathbf{X})^{-1}\mathbf{X}' = \mathbf{Z}(\mathbf{Z}'\mathbf{Z})^{-1}\mathbf{Z}'$. Therefore the projection matrix $\mathbf{Q}_x$ is independent of the covariance structure $\mathbf{\Sigma}$. In what follows we can assume $\mathbf{\Sigma} = \mathbf{I}_p$ and the $p$ coordinates of $X_i$ are i.i.d standard normals. }

\subsection{An approximate likelihood-ratio test}
We first derive a test statistic from the concept of likelihood ratio test. For the regression model (\ref{model}) and under Assumption (a), the likelihood function is simply
\begin{eqnarray*}
L(\boldsymbol{\beta}, \sigma_1^2, \ldots, \sigma_n^2)=(2\pi)^{-n/2} \left(\sigma_1^2\cdots\sigma_n^2\right)^{-1/2} \exp \left\{-\frac{1}{2}\sum_{i=1}^n \frac{(y_i-X_i\boldsymbol{\beta})^2}{\sigma_i^2}\right\}.
\end{eqnarray*}
Without assuming the homoscedasticity, the likelihood is maximised by solving the system of equations
\begin{eqnarray*}
\left\{ \begin{array}{l} \displaystyle\frac{\partial \log L}{\partial \sigma_i^2}=-\frac{1}{2\sigma_i^2} +\frac{1}{2\sigma_i^4}(y_i-X_i \boldsymbol{\beta})^2 =0, \\
\displaystyle\frac{\partial \log L}{\partial \boldsymbol{\beta}} =-\frac{1}{2} \sum_{i=1}^n\frac{2(y_i-X_i\boldsymbol{\beta})}{\sigma_i^2}(-X_i)=0, \end{array}\right. \quad 1\leq i\leq n.
\end{eqnarray*}
Therefore, the maximum likelihood estimator (MLE) $(\hat{\boldsymbol{\beta}}, \hat{\sigma}_1^2, \ldots, \hat{\sigma}_n^2)$ of $(\boldsymbol{\beta}, \sigma_1^2, \ldots, \sigma_n^2)$ satisfy the system of equations
\begin{eqnarray*}
\left\{ \begin{array}{l} \displaystyle\hat{\sigma}_i^2=(y_i-X_i \hat{\boldsymbol{\beta}})^2, \\
\displaystyle \hat{\boldsymbol{\beta}}=\left( \sum_{i=1}^n \frac{X_i' X_i}{\hat{\sigma}_i^2} \right)^{-1} \sum_{i=1}^n \frac{y_iX_i'}{\hat{\sigma}_i^2}, \end{array}\right. \quad 1 \leq i \leq n.
\end{eqnarray*}
The corresponding maximized likelihood is
\begin{eqnarray*}
L_1=(2\pi)^{-n/2}  \prod \{(y_i-X_i \hat{\boldsymbol{\beta}})^2\}^{-1/2} \exp(-n/2).
\end{eqnarray*}
Notice that since the number of unknown parameters $p+n$ exceeds the sample size, this MLE cannot be a reliable estimator. Nevertheless, this likelihood concept will help us to define a meaningful test statistic for testing the homoscedasticity hypothesis as follows: we approximate the MLE $\hat{\boldsymbol{\beta}}$ in the maximized likelihood $L_1$ by the OLS $\hat{\boldsymbol{\beta}}_0$ to get an approximate value
\begin{eqnarray*}
L_1^\ast = (2\pi)^{-n/2} \prod \{(y_i -X_i \hat{\boldsymbol{\beta}}_0)^2\}^{-1/2} \exp(-n/2).
\end{eqnarray*}
On the other hand under the homoscedasticity hypothesis, the OLS estimator $\hat{\boldsymbol{\beta}}_0$ and the estimator of the variance
\[\hat{\sigma}_0^2=\frac{1}{n}\sum_{i=1}^n (y_i-X_i\hat{\boldsymbol{\beta}}_0)^2,\]
are in fact the MLEs.
So the maximized likelihood under the null hypothesis is
\begin{eqnarray}
L_0=(2\pi)^{-n/2}(\hat{\sigma}_0^2)^{-n/2}\exp(-n/2).
\end{eqnarray}
Therefore, the approximate likelihood ratio, {\colred{}likelihood ratio is first derived by \citet{mauchly1940significance}}, is defined as
\begin{eqnarray*}
\frac{L_0}{L_1^\ast}=\frac{(\hat{\sigma}_0^2)^{-n/2}}{\left( \prod_{i=1}^n (y_i-X_i\hat{\boldsymbol{\beta}}_0)^2\right)^{-1/2}} = \left\{ \frac{\frac{1}{n}\sum_{i=1}^n \hat{\varepsilon}_i^2}{\left( \prod_{i=1}^n \hat{\varepsilon}_i^2 \right)^{1/n}} \right\}^{-\frac{n}{2}},
\end{eqnarray*}
where it is reminded that $\hat{\varepsilon}_i=Y_i-X_i\hat{\boldsymbol{\beta}}_0$. This suggests to consider the \emph{approximate likelihood-ratio statistic}
\begin{eqnarray}
T_1= -\frac{2}{n}\log \frac{L_0}{L_1^\ast}= \log \frac{\frac{1}{n}\sum_{i=1}^n \hat{\varepsilon}_i^2}{\left( \prod_{i=1}^n \hat{\varepsilon}_i^2 \right)^{1/n}}.
\end{eqnarray}
Interestingly enough, the statistic $T_1$ depends on the ratio of the arithmetic mean of the squared residuals over their geometric mean: $T_1\geq 0$ always and a large value of $T_1$ will indicate a significant deviation of the residuals $\{\hat{\varepsilon}_i^2\}$ from a constant, that is presence of heteroscedasticity. Meanwhile, this statistic has a scale-free property and is not affected by the magnitude of the variance $\sigma^2$ under the null hypothesis. Therefore, without loss of generality for the study of $T_1$, we assume that $\sigma^2=1$ under the null. The asymptotic distribution of $T_1$ under the null is derived in the following theorem.
\begin{theorem2}\label{T1}
Assume that Assumptions (a)-(b)-(d) are satisfied for the regression model (\ref{model}). Then under the null hypothesis of homoscedasticity, we have as $n\to \infty$
\begin{eqnarray}\label{T1_normality}
\sqrt{n}\left(T_1-[\log 2 +\gamma]\right)\overset{\mathcal{D}}{\longrightarrow} \mathcal{N} \left(0, \frac{\pi^2}{2}-2 \right),
\end{eqnarray}
where $\gamma\approx 0.5772$ is the Euler constant.
\end{theorem2}
The testing procedure using $T_1$ with the critical value from (\ref{T1_normality}) is referred as the \emph{approximate likelihood-ratio test} (ALRT). In addition to the scale-free property mentioned above, an attractive feature appears here is that the asymptotic distribution of $T_1$ is completely independent of $p/n$, the relative magnitude of the dimension $p$ over the sample size $n$. This prefigures a large applicability of the procedure to a wide range of combinations of $(p, n)$ in finite-sample situations. This robustness is indeed confirmed by the simulation study reported in Section 3.

The proof of Theorem \ref{T1} is based on the following lemma, which establishes the asymptotic limit of the joint distribution of  $\sum_{i=1}^n \hat{\varepsilon}_i^2$ and $\sum_{i=1}^n \log \hat{\varepsilon}_i^2$ under the null.
\begin{lemma}\label{joint1}
Let $\{\hat{\varepsilon}_i\}_{1\leq i \leq n}$ be the sequence of the OLS residuals given in (\ref{error_h1}). Then, under $H_0$ and Assumptions (a)-(b)-(d), and as $n\to \infty$, we have
\begin{eqnarray}
\mathbf{\Sigma}_1^{-1/2}\left\{\left(\begin{array}{c} \sum_{i=1}^n \hat{\varepsilon}_i^2 \\ \sum_{i=1}^n \log \hat{\varepsilon}_i^2 \end{array}\right) -\boldsymbol{\mu}_1\right\} \overset{\mathcal{D}}{\longrightarrow} \mathcal{N}(\mathbf{0}, \mathbf{I}_2),
\end{eqnarray}
where 
\begin{eqnarray*}
\boldsymbol{\mu}_1 = \left(\begin{array}{c} k \\ n\left( -\gamma -\log 2 +\log c_n \right)\end{array}\right),
\end{eqnarray*}
and
\begin{eqnarray*}
\mathbf{\Sigma}_1 = \left(\begin{array}{cl} 2k & 2n \\ 2n & n\left( \pi^2/2+2/c_n-2\right) \end{array}\right).
\end{eqnarray*}
\end{lemma}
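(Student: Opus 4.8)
The plan is to exploit the Gaussian design to replace the residual vector by an explicit, tractable random vector, and then to reduce both coordinate sums to sums of \emph{independent} random variables, to which classical central limit theory applies.

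First I would record a distributional representation of the residuals. Under Assumption (b) the columns of $\mathbf{X}$ are rotationally invariant, so for every fixed orthogonal $\mathbf{R}$ one has $\mathbf{R}\mathbf{X}\overset{d}{=}\mathbf{X}$ and hence $\mathbf{R}\mathbf{Q}_x\mathbf{R}'\overset{d}{=}\mathbf{Q}_x$; combined with the fact that $\mathbf{Q}_x$ is a rank-$k$ projection, this shows $\mathbf{Q}_x\overset{d}{=}\mathbf{O}\,\mathrm{diag}(\mathbf{I}_k,\mathbf{0})\,\mathbf{O}'$ with $\mathbf{O}$ Haar-distributed on the orthogonal group and independent of $\boldsymbol{\varepsilon}$. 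Writing $\boldsymbol{\zeta}=\mathbf{O}'\boldsymbol{\varepsilon}$, which under $H_0$ (with $\sigma^2=1$) is $N(\mathbf{0},\mathbf{I}_n)$ and independent of $\mathbf{O}$, gives $\hat{\boldsymbol{\varepsilon}}=\mathbf{V}\boldsymbol{\zeta}_{1:k}$, where $\mathbf{V}$ collects the first $k$ columns of $\mathbf{O}$ and $\boldsymbol{\zeta}_{1:k}=(\zeta_1,\dots,\zeta_k)'$. Passing to polar coordinates $\boldsymbol{\zeta}_{1:k}=r\boldsymbol{\omega}$ with $r^2\sim\chi^2_k$ and $\boldsymbol{\omega}$ uniform on $S^{k-1}$ independent of $r$, and using the orthogonal invariance of $\mathbf{V}$, one checks that $\mathbf{u}:=\mathbf{V}\boldsymbol{\omega}$ is uniform on $S^{n-1}$ and independent of $r$. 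Hence
\[\hat{\boldsymbol{\varepsilon}}\overset{d}{=}r\,\mathbf{u},\qquad r^2\sim\chi^2_k,\quad \mathbf{u}\ \text{uniform on}\ S^{n-1},\quad r\perp\mathbf{u}.\]
Realising $\mathbf{u}=\mathbf{g}/\|\mathbf{g}\|$ with $\mathbf{g}=(g_1,\dots,g_n)'\sim N(\mathbf{0},\mathbf{I}_n)$ independent of $r$, the two statistics become
\[\sum_{i=1}^n\hat{\varepsilon}_i^2=r^2,\qquad \sum_{i=1}^n\log\hat{\varepsilon}_i^2=n\log r^2+\sum_{i=1}^n\log g_i^2-n\log\!\Big(\sum_{i=1}^n g_i^2\Big).\]
This is the key simplification: the whole problem is now a function of one $\chi^2_k$ variable $r^2$ and an i.i.d.\ Gaussian sequence $g_1,\dots,g_n$, with these two blocks independent.

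Second I would linearise the logarithms. Writing $\xi=r^2-k$ and using $n\log r^2=n\log k+(n/k)\xi+O_p(1)$ (delta method, valid since $k\to\infty$), the contribution of $r^2$ to the second statistic is, up to a constant and an $O_p(1)$ remainder, the perfectly correlated pair $\xi\,(1,c_n^{-1})$. For the Gaussian block, set $\bar G=n^{-1}\sum_i g_i^2$ and expand $n\log\bar G=\sum_i(g_i^2-1)-\tfrac{n}{2}(\bar G-1)^2+\cdots$; since $\bar G-1=O_p(n^{-1/2})$ the quadratic and higher terms are $O_p(1)$, hence negligible after the $n^{-1/2}$ scaling. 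Consequently $\sum_{i=1}^n\log\hat{\varepsilon}_i^2-n\log r^2=\sum_{i=1}^n\big(\log g_i^2-g_i^2\big)+O_p(1)$, so the effective summand is $h(g)=\log g^2-g^2$, an i.i.d.\ sequence with finite variance (all moments of $\log g^2$ being finite). I would then invoke the classical Gamma identities to pin down the constants: for $W\sim\chi^2_\nu$ one has $\mathrm{Var}(\log W)=\psi'(\nu/2)$ and $\mathrm{Cov}(W,\log W)=2$, together with $E[\log\chi^2_1]=-\gamma-\log2$ and $\mathrm{Var}(\log\chi^2_1)=\pi^2/2$, and the asymptotics $\psi(k/2)-\psi(n/2)=\log c_n+O(1/k)$ and $\psi'(n/2)=2/n+O(1/n^2)$.

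Third I would assemble the bivariate CLT by the Cram\'er--Wold device. For fixed $a,b$, the preceding reductions give
\[a\sum_i\hat{\varepsilon}_i^2+b\sum_i\log\hat{\varepsilon}_i^2-(\text{centering})=\big(a+b\,c_n^{-1}\big)\,\xi+b\sum_{i=1}^n\big(h(g_i)-E\,h(g_i)\big)+o_p(\sqrt n),\]
a sum of two independent terms, each obeying a univariate CLT ($\xi/\sqrt{2k}\to N(0,1)$ from $k\to\infty$, and the i.i.d.\ sum by Lyapunov). Their independence yields joint asymptotic normality of the scaled pair, and a direct computation of the limiting covariance reproduces $\boldsymbol{\Sigma}_1$: the $\xi$-block contributes $2k\,(1,c_n^{-1})(1,c_n^{-1})'$, i.e.\ the entries $2k$, $2n$, and $2n/c_n$, and the $h$-block adds $n\,\mathrm{Var}(h)=n(\pi^2/2-2)$ to the lower-right entry, giving $n(\pi^2/2+2/c_n-2)$; the means match $\boldsymbol{\mu}_1$ after the digamma asymptotics above.

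The main obstacle is not the algebra but the bookkeeping of error terms: I must verify that every remainder discarded in the two logarithmic expansions, and the gap between the exact means/variances and the stated $\boldsymbol{\mu}_1,\boldsymbol{\Sigma}_1$ (which are only $O(1)$), is genuinely negligible once multiplied by $\boldsymbol{\Sigma}_1^{-1/2}$, whose entries are of order $n^{-1/2}$. The delicate point is that the quadratic term in the expansion of $n\log\bar G$ is $O_p(1)$ rather than $o_p(1)$, so its irrelevance relies precisely on the $\sqrt n$ normalisation; care is also needed to justify the orthogonal-invariance/Stiefel argument of the first step and the uniform integrability underlying the Lyapunov condition for $h(g_1)$.
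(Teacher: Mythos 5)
Your proposal is correct, and it takes a genuinely different route from the paper. The paper also starts from the Haar representation $\hat{\boldsymbol{\varepsilon}}=\mathbf{U}Z$, but then conditions on the rows $\mathbf{v}_i$ of the $k$-frame, writes $\hat{\varepsilon}_i=\|\mathbf{v}_i\|\eta_i$, and computes the mean vector and covariance matrix term by term using the marginal law $\|\mathbf{v}_i\|^2\sim \mathrm{Beta}(k/2,(n-k)/2)$, the Dirichlet joint law of the squared entries, and a sequence of expansions for $E\log\|\mathbf{v}_i\|^2$, $E(\log\|\mathbf{v}_i\|^2)^2$, $E(\|\mathbf{v}_1\|^2\log\|\mathbf{v}_2\|^2)$, etc.; the joint asymptotic normality itself is simply asserted by appeal to ``standard central limit theory'' for the degenerate Gaussian vector $\hat{\boldsymbol{\varepsilon}}$. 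You instead push the reduction one step further, to $\hat{\boldsymbol{\varepsilon}}\overset{d}{=}r\,\mathbf{g}/\|\mathbf{g}\|$ with $r^2\sim\chi^2_k$ independent of the i.i.d.\ Gaussians $g_i$ (your uniformity/independence argument for $\mathbf{u}=\mathbf{V}\boldsymbol{\omega}$ is sound), so that $\sum_i\hat{\varepsilon}_i^2=r^2$ exactly and $\sum_i\log\hat{\varepsilon}_i^2$ becomes an explicit function of the two independent blocks. After your two log-expansions (both remainders are indeed $O_p(1)$, the first using $\liminf c_n>0$ from Assumption (d), so they die under the $n^{-1/2}$ scaling since $\mathbf{\Sigma}_1^{-1/2}=O(n^{-1/2})$), the centered pair is exactly $\bigl(\xi,\;c_n^{-1}\xi+\sum_i(h(g_i)-Eh)\bigr)$ up to negligible terms, whose covariance matrix coincides with $\mathbf{\Sigma}_1$ \emph{exactly}, and the CLT follows honestly from independence of the blocks plus Cram\'er--Wold (for the i.i.d.\ block the classical finite-variance CLT suffices; Lyapunov is not needed). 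This buys two things the paper's proof does not deliver: a genuine proof of the bivariate normality (the paper's residuals $\log\hat{\varepsilon}_i^2$ are dependent through $\mathbf{U}$, so its CLT appeal is the weakest link), and replacement of the Beta/Dirichlet moment bookkeeping by three digamma/trigamma identities. What the paper's conditional-moment machinery buys in exchange is reusability: the same Haar cross-moment catalogue drives the fourth-moment computations of Lemma~\ref{joint2}. Two small points of hygiene for a final write-up: since $c_n$ varies with $n$, the last step needs a triangular-array formulation (or a subsequence argument with $c_n\to c$) of the Cram\'er--Wold limit; and your phrase ``valid since $k\to\infty$'' for the expansion of $n\log r^2$ should explicitly invoke Assumption (d), since the discarded term is $O_p(n/k)$, not $o_p(1)$ in general.
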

The proofs of Lemma \ref{joint1} and Theorem \ref{T1} are postponed to the appendix.

\subsection{ The coefficient-of-variation test}
The departure of a sequence of numbers from a constant can also be efficiently assessed by its coefficient of variation. In multivariate analysis, this idea is closely related to optimal invariant tests, see \citet{john1971some}. Applying this idea to the sequence of residuals $\{\hat{\varepsilon}_i\}$ leads to the following coefficient-of-variation statistic
\begin{eqnarray}
T_2=\frac{\frac{1}{n}\sum_{i=1}^n( \hat{\varepsilon}_i^2-\bar{m})^2}{\bar{m}^2}, \quad \textrm{with} \ \bar{m}=\frac{1}{n}\sum_{i=1}^n \hat{\varepsilon}_i^2.
\end{eqnarray}
Obviously, the statistic $T_2$ becomes small and close to 0 under the null hypothesis of homoscedasticity, and larger under the alternative hypothesis of heteroscedasticity. Like the previous statistic $T_1$, this statistic is also scale-free and again we can assume $\sigma^2=1$ for $T_2$ under the null without loss of generality. The asymptotic distribution of $T_2$ under the null hypothesis is derived in the following theorem.
\begin{theorem2}\label{T2}
Assume that Assumptions (a)-(b)-(c) are satisfied for the regression model (\ref{model}). Then under the null hypothesis of homoscedasticity, we have as $n\to \infty$
\begin{eqnarray}\label{T2_normality}
\sqrt{n}(T_2-2)\overset{\mathcal{D}}{\longrightarrow}N(0, 24).
\end{eqnarray}
\end{theorem2}
The testing procedure using $T_2$ with the critical value from (\ref{T2_normality}) is referred as the \emph{coefficient-of-variation test} (CVT). Similar to the statistic $T_1$, the asymptotic distribution of $T_2$ is also scale free and independent of $p/n$, the relative magnitude of the dimension $p$ over the sample size $n$.

The proof of Theorem \ref{T2} is based on the following lemma, which establishes the asymptotic limit of the joint distribution of $\sum_{i=1}^n \hat{\varepsilon}_i^4$ and $\sum_{i=1}^n \hat{\varepsilon}_i^2$ under the null.
\begin{lemma}\label{joint2}
Let $\{\hat{\varepsilon}_i\}_{1\leq i \leq n}$ be the sequence of the OLS residuals given in (\ref{error_h1}). Then, under $H_0$ and Assumptions (a)-(b)-(c), and as $n\to \infty$, we have
\begin{eqnarray}
\mathbf{\Sigma}_2^{-1/2}\left\{\left(\begin{array}{c}\sum_{i=1}^n \hat{\varepsilon}_i^4\\ \sum_{i=1}^n \hat{\varepsilon}_i^2 \end{array}\right)-\boldsymbol{\mu}_2\right\} \overset{\mathcal{D}}{\longrightarrow} \mathcal{N}(\mathbf{0}, \mathbf{I}_2),
\end{eqnarray}
where
\begin{eqnarray*}
\boldsymbol{\mu}_2=\left(\begin{array}{c} \frac{3k(k+2)}{n+2} \\ k \end{array}\right),
\end{eqnarray*}
and
\begin{eqnarray*}
\boldsymbol{\Sigma}_2=\left(\begin{array}{rc} \frac{24k^4}{n^3}+\frac{72k^3}{n^2} & \frac{12k^2}{n+2}\\ \frac{12k^2}{n+2} & 2k \end{array} \right).
\end{eqnarray*}
\end{lemma}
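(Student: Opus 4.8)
The plan is to exploit the Gaussianity of the design (Assumption (b)) to reduce the residual vector to a rotation-invariant object, and then to separate the two sources of randomness. Since $\mathbf{X}$ is Gaussian and independent of $\boldsymbol{\varepsilon}$, the matrix $\mathbf{Q}_x$ projects onto a uniformly distributed $k$-dimensional subspace, i.e. $\mathbf{Q}_x\overset{d}{=}\mathbf{U}\mathbf{\Lambda}\mathbf{U}'$ with $\mathbf{U}$ Haar-distributed orthogonal and $\mathbf{\Lambda}=\mathrm{diag}(\mathbf{I}_k,\mathbf{0})$. Writing $\boldsymbol{\eta}=\mathbf{U}'\boldsymbol{\varepsilon}\sim N(\mathbf{0},\mathbf{I}_n)$ (which is independent of $\mathbf{U}$ under $H_0$ with $\sigma^2=1$), I get $\hat{\boldsymbol{\varepsilon}}\overset{d}{=}\mathbf{U}\boldsymbol{\xi}$ with $\boldsymbol{\xi}=(\eta_1,\dots,\eta_k,0,\dots,0)'$. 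By left-invariance of the Haar measure, conditionally on $\|\boldsymbol{\xi}\|$ the vector $\hat{\boldsymbol{\varepsilon}}$ is uniform on the centred sphere of that radius; hence $\hat{\boldsymbol{\varepsilon}}\overset{d}{=}r\,\mathbf{u}$, where $r^2=\boldsymbol{\varepsilon}'\mathbf{Q}_x\boldsymbol{\varepsilon}\sim\chi^2_k$, the direction $\mathbf{u}$ is uniform on $\mathbb{S}^{n-1}$, and $r\perp\mathbf{u}$. This is the key structural step, and everything else is built on it.

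The representation immediately diagonalises the problem. Indeed $S_2:=\sum_i\hat{\varepsilon}_i^2=r^2\sim\chi^2_k$ exactly, while $S_4:=\sum_i\hat{\varepsilon}_i^4=r^4\sum_iu_i^4=S_2^2\,P_4$, where $P_4:=\sum_iu_i^4$ is a function of $\mathbf{u}$ alone and is therefore independent of $S_2$. Using this independence together with the chi-square moments $\mathrm{E}\,S_2^m=k(k+2)\cdots(k+2m-2)$ and the spherical moments of $\mathbf{u}$ (obtained from $u_i=g_i/\|\mathbf{g}\|$, $\mathbf{g}\sim N(\mathbf{0},\mathbf{I}_n)$, or from the Dirichlet law of $(u_1^2,\dots,u_n^2)$), the entries of $\boldsymbol{\mu}_2$ and $\boldsymbol{\Sigma}_2$ follow by direct computation; in particular $\mathrm{E}\,P_4=3/(n+2)$ and $\mathrm{E}\,S_2^2=k(k+2)$ give $\mathrm{E}\,S_4=3k(k+2)/(n+2)$, matching $\boldsymbol{\mu}_2$ exactly.

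For the distributional limit I would establish the two marginal central limit theorems separately and then recombine. The statistic $S_2\sim\chi^2_k$ satisfies $(S_2-k)/\sqrt{2k}\to N(0,1)$ by the classical chi-square CLT (here Assumption (c), $k\to\infty$, is exactly what is needed). For $P_4$ I would use the Gaussian representation to write $P_4=A/B^2$ with $A=\sum_ig_i^4$ and $B=\sum_ig_i^2$; the bivariate CLT for the i.i.d.\ sums $(A,B)$, followed by the delta method applied to $(a,b)\mapsto a/b^2$, yields $\mathrm{Var}(P_4)\sim 24/n^3$ and asymptotic normality of $P_4$. Since $S_2$ and $P_4$ are exactly independent, the pair $(S_2,P_4)$ is jointly asymptotically normal with diagonal limiting covariance. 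Finally, applying the delta method to the smooth map $(s,t)\mapsto(s^2t,\,s)$ transports this to $(S_4,S_2)=(S_2^2P_4,\,S_2)$, and the Jacobian sandwich $J\,\mathrm{diag}(2k,24/n^3)\,J'$ with $J=\bigl(\begin{smallmatrix}6k/(n+2)&k^2\\1&0\end{smallmatrix}\bigr)$ reproduces $\boldsymbol{\Sigma}_2$ to leading order.

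I expect the delicate point to be this recombination, rather than either marginal CLT. The two components are asymptotically almost perfectly correlated: the $S_2$-channel enters $S_4$ through the derivative $2S_2P_4$ and dominates the variance, while the genuinely new randomness carried by $P_4$ lives at the finer scale $k^2\sqrt{\mathrm{Var}(P_4)}\asymp k^2/n^{3/2}$, so $\boldsymbol{\Sigma}_2$ is nearly singular. The care therefore goes into the linearisation remainder: the quadratic fluctuation of $S_2$ (of order $(S_2-k)^2\,\mathrm{E}\,P_4$) must be shown negligible relative to this finer scale, so that the direction off $S_2$ is governed by the asymptotically normal $P_4$-channel rather than by a $\chi^2_1$-type term. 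This comparison of scales is what makes the proportional asymptotic regime the natural setting, and keeping $S_2$ and $P_4$ separated, where they are independent and individually non-degenerate, is what keeps the argument clean.
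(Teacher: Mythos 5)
Your proposal is correct, and it takes a genuinely different route from the paper. The paper also begins by reducing to $\hat{\boldsymbol{\varepsilon}}\overset{d}{=}\mathbf{U}Z$ with $\mathbf{U}$ a $k$-frame of a Haar matrix and $Z\sim N(\mathbf{0},\mathbf{I}_k)$ independent of $\mathbf{U}$, but from there it expands $\hat{\varepsilon}_i=\sum_j v_{ij}z_j$ and computes the exact mean, variance and covariance of $\bigl(\sum_i\hat{\varepsilon}_i^4,\sum_i\hat{\varepsilon}_i^2\bigr)$ by brute force, invoking a long catalogue of exact Haar cross-moment identities (terms such as $E(v_{11}^4v_{22}^4)$, $E(v_{11}^3v_{12}v_{21}v_{22})$, up to eight-factor products), and then asserts the joint Gaussian limit with only an appeal to ``standard central limit theory.'' Your polar factorization $\hat{\boldsymbol{\varepsilon}}\overset{d}{=}r\,\mathbf{u}$ with $r^2=\boldsymbol{\varepsilon}'\mathbf{Q}_x\boldsymbol{\varepsilon}\sim\chi^2_k$ exactly independent of the uniform direction $\mathbf{u}$ bypasses that entire catalogue: $S_2=r^2$ is exactly $\chi^2_k$ (as in the paper), and $S_4=S_2^2P_4$ with $P_4=\sum_iu_i^4$ independent of $S_2$ reduces all moments to chi-square and spherical (Dirichlet) moments. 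Your values check out: $E S_4=E S_2^2\cdot EP_4=3k(k+2)/(n+2)$ matches $\boldsymbol{\mu}_2$ exactly, the entries $12k^2/(n+2)$ and $2k$ match $\boldsymbol{\Sigma}_2$ exactly, $\mathrm{Var}(P_4)\sim 24/n^3$ is right, and the Jacobian sandwich reproduces the $(1,1)$ entry $72k^3/(n+2)^2+24k^4/n^3\sim 72k^3/n^2+24k^4/n^3$. What the paper's computation buys is exact finite-$n$ moment formulas; what yours buys is an actually justified joint limit, since the exact independence of $S_2$ and $P_4$ makes bivariate convergence immediate from the two marginal CLTs, whereas the paper's CLT claim for the dependent sums is left without argument. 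Your identification of the near-singularity of $\boldsymbol{\Sigma}_2$ and of the linearisation remainder as the real content of the recombination is a point the paper never engages with.

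One caveat, which your own scale comparison already exposes: the quadratic fluctuation $(S_2-k)^2EP_4\asymp k/n$ is negligible against the fine scale $k^2\sqrt{\mathrm{Var}(P_4)}\asymp k^2/n^{3/2}$ only when $\sqrt{n}/k\to 0$, so your argument proves the lemma when $k\gg\sqrt{n}$ (in particular under Assumption (d)), whereas the lemma is stated under Assumption (c) alone. This is not a defect of your proof but an overreach in the statement: using your exact decomposition, the residual variance in the direction orthogonal to the dominant $S_2$-channel is $ES_2^4\,\mathrm{Var}(P_4)+8k(k+2)(EP_4)^2\approx 24k^4/n^3+72k^2/n^2$, and the second ($\chi^2_1$-type) term dominates precisely when $k=o(\sqrt{n})$, in which case the $\boldsymbol{\Sigma}_2^{-1/2}$-normalized vector cannot converge to $\mathcal{N}(\mathbf{0},\mathbf{I}_2)$. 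The same restriction lurks in the paper's own proof of Theorem 2, where the variance expansion $\mathrm{var}(T_2)=24+288/k+360/(c_nk)+\cdots$ contains the term $360/(c_nk)=360\,n/k^2$, which vanishes under (d) but not under (c) alone. So your proof is sound exactly where the stated covariance is the true asymptotic covariance, and the boundary you flagged is intrinsic to the result, not to your method.
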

The proofs of Lemma \ref{joint2} and Theorem \ref{T2} are postponed in the appendix.

\section{Simulation experiments}

We have undertaken an extensive simulation study to investigate the finite sample performance of the proposed tests, ALRT and CVT. Comparisons are also made with several existing popular methods: the BP test, proposed by \citet{breusch1979simple} and modified by \citet{koenker1981note}; the White test \citep{white1980heteroskedasticity}; and the DM test \citep{dette1998testing}.

\citet{breusch1979simple} constructed a general test statistic, assuming that the conditional variance has a known functional form $h(z_t^\prime \alpha)$, where $z_t=(1, X_i)^\prime$ and $\alpha=(\alpha_0, \alpha_1, \ldots, \alpha_p)^\prime$. They proposed a Lagrange multiplier statistic to test the joint null hypothesis of $\alpha_1=\alpha_2=\cdots =\alpha_p=0$ while the intercept $\alpha_0$ is unspecified. \citet{koenker1981note} modified this test in order to improve its empirical size. This test has been widely used in the literature and is the representative one in the family of Lagrange multiplier or score tests, as it includes many other tests (e.g. \citealp{cook1983diagnostics} and \citealp{eubank1993detecting}) as special cases.

The White test fits an artificial regression of the squared OLS residuals $(\hat{\varepsilon}_i^2)$ on the elements $(x_{ij}x_{ik}, k\geq j)$ of the lower triangle of the matrix $X_i^\prime X_i$, and the test statistic is the squared multiple correlation coefficient from this regression. The author proved that the statistic is asymptotically distributed as $\chi^2$ with $p(p+1)/2$ degrees of freedom under the null hypothesis of homoscedasticity (as the sample size tends to infinity). 

\citet{dette1998testing} proposed a nonparametric method, the DM test. It is constructed on estimation of empirical variance of expected squared residuals, and its asymptotic normality is given. This nonparametric test avoids the estimation of the regression curve directly, which makes it more robust and better than those tests based the estimated residuals.

\subsection{Empirical sizes of the tests}

We explore the performance of these tests using different combination of $p$ and $n$. The sample sizes $n=100, 500, 1000$ and ratios $p/n = 0.05, 0.1, 0.3, 0.5, 0.7, 0.9$ are considered. {\colred{}Each simulation is repeated 5000 times to test the stability of the method. Empirical size of a test is the percentage of rejected tested cases.} According to the model (\ref{model}), the design matrix $X_i$ are assumed to be multi-normal. The error $\varepsilon_i$ is drawn from standard normal as the size and power of the proposed tests are invariant with respect to different scalings of variance function.  The nominal test level is 5\%.

\begin{table}
\centering
\footnotesize
\caption{Empirical sizes of the ALRT, CVT, White and BP tests with sample size $n=100, 500, 1000$ and varying ratio $p/n$ (in \%).}
\label{table1}
\begin{tabular}{|c|cccc|cccc|cccc|}
\hline
\multirow{2}{*}{$p/n$} & \multicolumn{4}{c|}{$n=100$} & \multicolumn{4}{c|}{n=500} & \multicolumn{4}{c|}{n=1000}\\
  & ALRT & CVT & White & BP & ALRT & CVT & White & BP & ALRT & CVT & White & BP\\
\hline
0.05 &4.62 & 4.24 & 5.92 & 3.74 & 4.88 & 5.66 & 0.16 & 4.64 & 5.16 & 5.48 & NA & 4.16 \\
0.1 &5.00 & 4.64 & 0.28 & 3.80 & 4.86 & 5.78 & NA & 3.64 &   5.44 & 5.20 & NA & 3.60\\
0.3 &4.98 & 4.70 & NA & 1.66 &   4.60 & 6.06 & NA & 1.88 &   5.38 & 5.06 & NA & 2.32 \\
0.5 &5.30 & 4.72 & NA & 0.52 &   4.84 & 4.80 & NA & 0.70 &   5.04 & 5.14 & NA & 0.72\\
0.7 &4.66 & 4.58 & NA & 0 &   5.60 & 5.50 & NA & 0.02 &   5.38 & 5.70 & NA & 0.02\\
0.9 &5.06 & 4.28 & NA & 0 &   5.48 & 5.24 & NA & 0  &       4.44 & 5.04 & NA & 0\\
\hline
\end{tabular}
{\em $^{*}$  \footnotesize{NA denotes $``$Not Applicable"} \hspace{9.5cm}}
\end{table}

Table \ref{table1} presents the empirical sizes of the ALRT, CVT, White and BP tests {\colred{}(values close to 5\% are better)}. The proposed ALRT and CVT tests are consistently accurate in all tested combinations of $(p, n)$ (including the smallest ones); they largely outperform the White and BP tests. This good performance can be explained by a fast convergence in the limiting results of ALRT (Theorem 1) and CVT (Theorem 2). The ALRT test performs a little better than the CVT test for small value of the ratio $p/n$, but the CVT test is preferred when $p/n$ is getting close to 1. The BP test loses its size from (approximately) 4\% to 1\% when the ratio $p/n$ increases from 0.05 to 0.5, while the White test has an empirical size of 0.16\% when the ratio is $p/n=0.05$ and sample size is $n=500$ (Notice that this test is not applicable when $p> 25$ due to its dimension-sample-size requirement $p(p+1)/2<n$).

\subsection{Empirical powers of the tests}

To investigate the power of these tests, we follow \citet{dette1998testing} and consider the following three models with different error forms:
\begin{itemize}
\item Model 1: $y_i=X_i \boldsymbol{\beta} + \varepsilon_i \exp( \mathbf{c}X_i)$;
\item Model 2: $y_i=X_i \boldsymbol{\beta} + \varepsilon_i (1+\mathbf{c}\sin(10 X_i))^2$;
\item Model 3: $y_i=X_i \boldsymbol{\beta} + \varepsilon_i (1+\mathbf{c}X_i)^2$;
\end{itemize}
where the vector $\mathbf{c}$ is filled with elements $0$ and/or $c_0=0.5$. The value $\mathbf{c}=\mathbf{0}$ corresponds to homoscedasticity, and we consider two levels of heteroscedasticity: $\mathbf{c}=(c_0\mathbf{1}_{p_0}^\prime, \mathbf{0}_{p-p_0}^\prime)^\prime$ with $p_0=1$ (1st component only) and $p_0=0.1 p$ (first 10\% of components). Same setting with Section 3.1 is used and empirical powers of the tests are obtained using 5000 replications for each scenario.

Tables \ref{table2}-\ref{table4} present the empirical powers of the
ALRT, CVT and BP tests for these three error models,
respectively. Plots are also provided for the case of sample size
$n=500$ for a easier comparison. The results of the White test are omitted here due to its worst performance in term of size in Table \ref{table1}. As expected, for each model, the power becomes larger as the level of heteroscedasticity increases. In general, the empirical powers of all tests become smaller as the dimension $p$ goes up (ratio $p/n$ increases); the reason is that the BP test is not suitable for high-dimensional setting, and the ALRT and CVT tests are related to the degree of freedom of $k=n-p$ which becomes small when dimension $p$ increases. The CVT test is most powerful in all tested cases. 

As for the three models considered, the results for Model 1 and Model 3 are similar with each other where the BP test show no power when $p/n >0.3$ while the ALRT and CVT tests have a reasonable power unless $p/n$ is close to 1. {\colred{} Recall that in such situation, the matrix $\mathbf{X}'\mathbf{X}$ is close to singularity, the OLS estimator is performing badly. However, our procedures still show a reasonable performance.} The situation in Model 2 is radically different where the BP test has no power for all tested combinations of $(p, n)$ while the ALRT and CVT keep a reasonable power (unless $p/n$ is close to 1) as in Model 1 and 3. In conclusion, generally in all the tested situations, the proposed tests ALRT and CVT outperform the BP tests in a large extent.

\pagebreak
\begin{table}[htp]
\centering
\footnotesize
\caption{Empirical powers of the ALRT, CVT and BP tests for Model 1 under two scenarios with sample size $n=100, 500, 1000$ and varying ratio $p/n$.}
\label{table2}
\begin{tabular}{|cc|ccc|ccc|ccc|}
\hline
\multicolumn{2}{|c|}{Settings} & \multicolumn{3}{c|}{$n=100$} & \multicolumn{3}{c|}{$n=500$}  & \multicolumn{3}{c|}{$n=1000$}\\
$p_0$ & $p/n$ & ALRT & CVT & BP & ALRT & CVT &  BP & ALRT & CVT & BP\\
\hline
\multirow{6}{*}{1} & 0.05 & 0.6150 & 0.8058 & 0.9622 & 0.9970 & 1 & 1 & 1 & 1 & 1 \\
\ & 0.1 & 0.5328 & 0.7580 & 0.8258 & 0.9892 & 1 & 0.9988 &   1 & 1 & 1 \\
\ & 0.3 & 0.2782 & 0.5384 & 0.1864 & 0.8080 & 0.9872 & 0.7404 & 0.9640 & 1 & 0.9504 \\
\ & 0.5 & 0.1378 & 0.3142 & 0.0182 & 0.3858 & 0.8434 & 0.1158 & 0.6246 & 0.9810 & 0.2476 \\
\ & 0.7 & 0.0724 & 0.1274 & 0   &        0.1386 & 0.4084 & 0.0014 & 0.2060 & 0.6372 & 0.0028 \\
\ & 0.9 & 0.0566 & 0.0542 & 0   &        0.0624 & 0.0822 & 0 &  0.0596 & 0.1010 & 0 \\
\hline
\multirow{6}{*}{$0.1p$} & 0.05 &  - & - & - & 1 & 1 & 0.9964 & 1 & 1 & 0.9834 \\
\ & 0.1 & - & - & - &  1 & 1 & 0.9290 & 1 & 1 & 0.7666 \\
\ & 0.3 & 0.6732 & 0.9234 & 0.3026 & 1 & 1 & 0.2464 & 1 & 1 & 0.1356 \\
\ & 0.5 & 0.4754 & 0.8620 & 0.0418 & 1 & 1 & 0.0420 & 1 & 1 & 0.0258 \\
\ & 0.7 & 0.2024 & 0.6026 & 0.0008 & 0.9710 & 1 & 0.0032 & 1 & 1 & 0.0034 \\
\ & 0.9 & 0.0600 & 0.0916 & 0 &   0.2534 & 0.9602 & 0 &  0.4872 & 1 & 0 \\
\hline
\end{tabular}
{\em $^{*}$  \footnotesize{$``$-" denotes no suitable value} \hspace{9.5cm}}

\bigskip  Plots for the case of $n=500$ \bigskip

\includegraphics[width=0.49\textwidth]{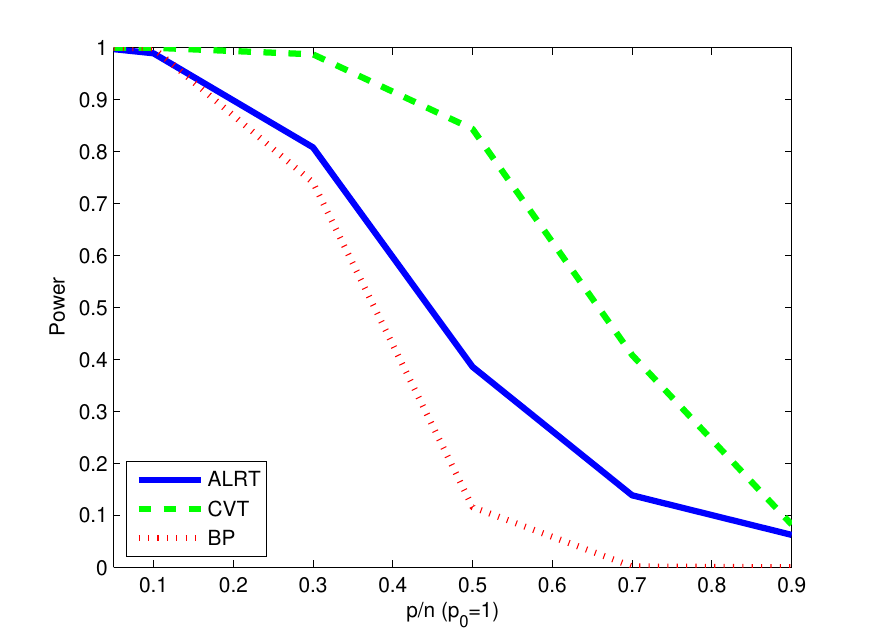}
\includegraphics[width=0.49\textwidth]{fig_T2_level1.pdf}

\end{table}


\newpage
\begin{table}[htp]
\centering
\footnotesize
\caption{Empirical powers of the ALRT, CVT and BP tests for Model 2 under two scenarios with sample size $n=100, 500, 1000$ and varying ratio $p/n$.}
\label{table3}
\begin{tabular}{|cc|ccc|ccc|ccc|}
\hline
\multicolumn{2}{|c|}{Settings} & \multicolumn{3}{c|}{$n=100$} & \multicolumn{3}{c|}{$n=500$}  & \multicolumn{3}{c|}{$n=1000$} \\
$p_0$ & $p/n$ & ALRT & CVT & BP & ALRT & CVT & BP & ALRT & CVT & BP\\
\hline
\multirow{6}{*}{1} & 0.05 & 0.9450 & 0.9276 & 0.0404 & 1 & 1 & 0.0388 & 1 & 1 & 0.0430 \\
 \ & 0.1 & 0.8436 & 0.8694 & 0.0342 & 1 & 1 & 0.0368 & 1 & 1 & 0.0338 \\
 \ & 0.3 & 0.4006 & 0.5922 & 0.0188 & 0.9240 & 0.9964 & 0.0238 & 0.9960 & 1 & 0.0208\\
 \ & 0.5 & 0.1658 & 0.3038 & 0.0060 & 0.4490 & 0.8192 & 0.0080 & 0.6826 & 0.9720 & 0.0062 \\
 \ & 0.7 & 0.0824 & 0.1138 & 0 &    0.1332 & 0.3102 & 0.0002 & 0.1846 & 0.4690 & 0.0004 \\
 \ & 0.9 & 0.0574 & 0.0504 & 0 &   0.0502 & 0.0710 & 0 & 0.0564 & 0.0732 & 0 \\
\hline
\multirow{6}{*}{$0.1p$} & 0.05 & - & - & - & 1 & 1 & 0.0484 & 1 & 1 & 0.0426 \\
 \ & 0.1 & - & - & - &                             1 & 1 & 0.0338 & 1 & 1 & 0.0376 \\
 \ & 0.3 & 0.4086 & 0.5990 & 0.0206 & 0.9262 & 0.9958 & 0.0200 & 0.9970 & 1 & 0.0232 \\
 \ & 0.5 & 0.1642 & 0.2952 & 0.0042 & 0.4460 & 0.8204 & 0.0048 & 0.6802 & 0.9714 & 0.0070 \\
 \ & 0.7 & 0.0760 & 0.1080 & 0 & 0.1376 & 0.3052 & 0.0006 & 0.1978 & 0.4642 & 0.0002 \\
 \ & 0.9 & 0.0484 & 0.0478 & 0 & 0.0556 & 0.0714 & 0 &  0.0562 & 0.0730 & 0 \\
\hline
\end{tabular}
{\em $^{*}$  \footnotesize{$``$-" denotes no suitable value}
  \hspace{9.5cm}}

\bigskip  Plots for the case of $n=500$ \bigskip

  \includegraphics[width=0.49\textwidth]{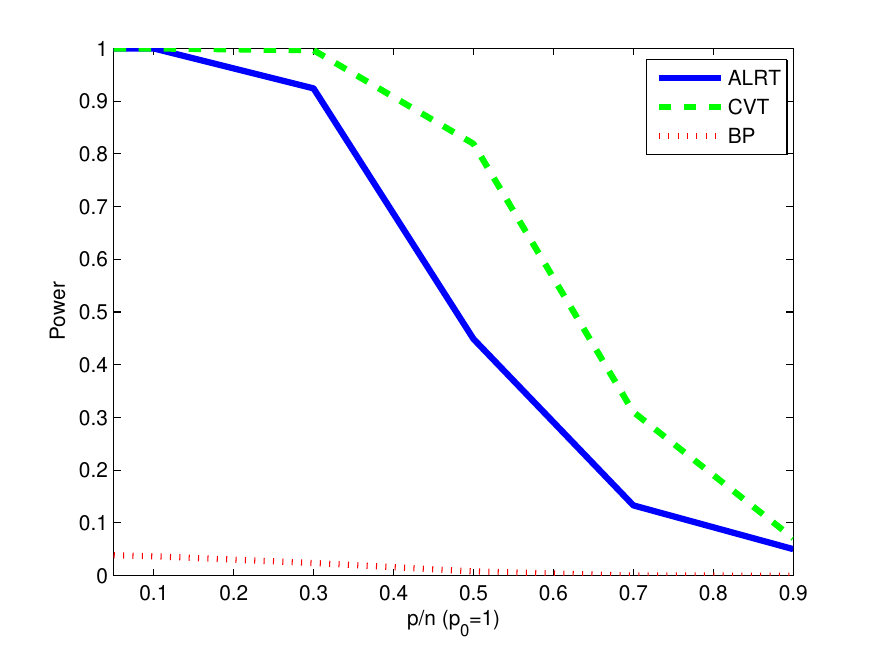}
  \includegraphics[width=0.49\textwidth]{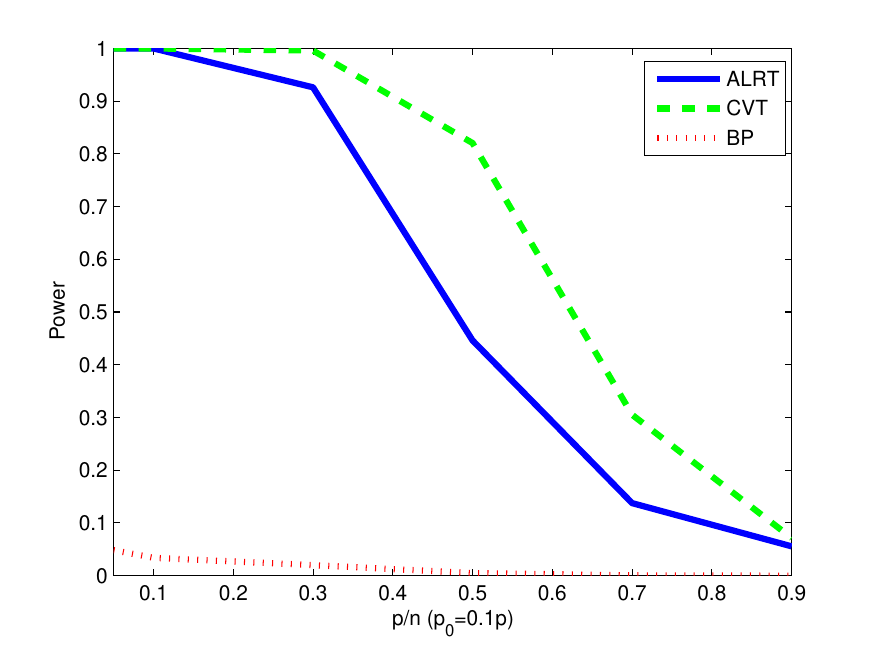}

\end{table}


\newpage 
\begin{table}[htp]
\centering
\footnotesize
\caption{Empirical powers of the ALRT, CVT and BP tests for Model 3 under two scenarios with sample size $n=100, 500, 1000$ and varying ratio $p/n$.}
\label{table4}
\begin{tabular}{|cc|ccc|ccc|ccc|}
\hline
\multicolumn{2}{|c|}{Settings} & \multicolumn{3}{c|}{$n=100$} & \multicolumn{3}{c|}{$n=500$} & \multicolumn{3}{c|}{$n=1000$} \\
$p_0$ & $p/n$ & ALRT & CVT & BP & ALRT & CVT & BP & ALRT & CVT & BP\\
\hline
\multirow{6}{*}{1} & 0.05 & 0.9648 & 0.9852 & 0.9914 & 1 & 1 & 1 & 1 & 1 & 1 \\
 \ & 0.1 & 0.9352 & 0.9706 & 0.9346 &   1 & 1 & 0.9996 & 1 & 1 & 1 \\
 \ & 0.3 & 0.5680 & 0.8346 & 0.3104 & 0.9932 & 1 & 0.8974 & 1 & 1 & 0.9886 \\
 \ & 0.5 & 0.2418 & 0.5276 & 0.0336 & 0.7298 & 0.9872 & 0.2312 & 0.9402 & 0.9998 & 0.4772 \\
 \ & 0.7 & 0.0976 & 0.2074 & 0 &  0.2336 & 0.6748 & 0.0040 & 0.3820 & 0.8960 & 0.0060 \\
 \ & 0.9 & 0.0550 & 0.0542 & 0 & 0.0638 & 0.1088 & 0 &  0.0762 & 0.1448 & 0\\
\hline
\multirow{6}{*}{$0.1p$} &0.05 & - & - & - & 1 & 1 & 0.9996 & 1 & 1 & 0.9998 \\
 \ & 0.1 & - & - & - &            1 & 1 & 0.9912 & 1 & 1 & 0.9868 \\
 \ & 0.3 & 0.7766 & 0.9578 & 0.3238 & 1 & 1 & 0.4826 & 1 & 1 & 0.4130 \\
 \ & 0.5 & 0.4034 & 0.7860 & 0.0360 & 0.9780 & 1 & 0.0682 & 1 & 1 & 0.0576 \\
 \ & 0.7 & 0.1430 & 0.3902 & 0 & 0.5108 & 0.9706 & 0.0022 & 0.7762 & 0.9996 & 0.0024 \\
 \ & 0.9 & 0.0572 & 0.0566 & 0 & 0.0942 & 0.2476 & 0 & 0.1208 & 0.3966 & 0 \\
\hline
\end{tabular}
{\em $^{*}$  \footnotesize{$``$-" denotes no suitable value} \hspace{9.5cm}}

\bigskip  Plots for the case of $n=500$ \bigskip

  \includegraphics[width=0.49\textwidth]{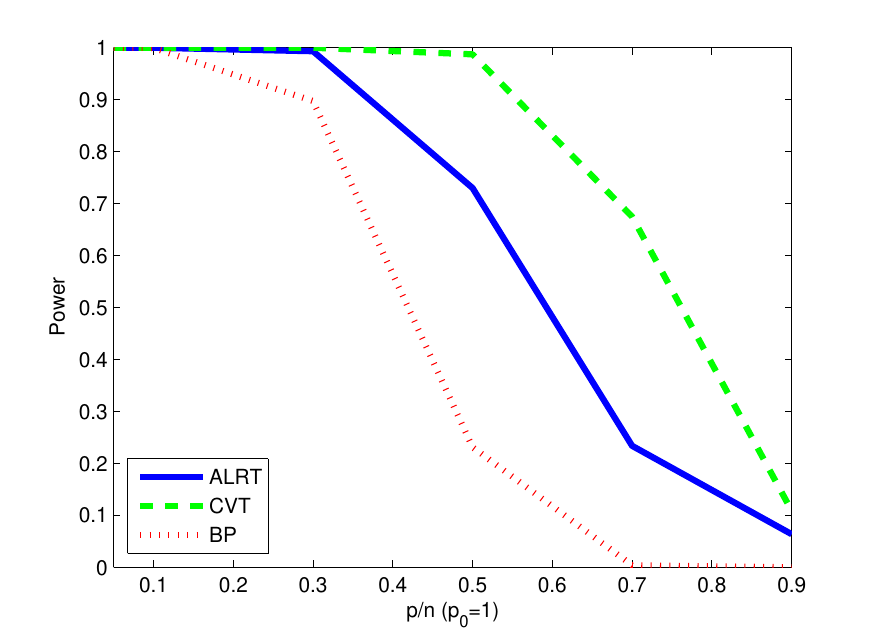}
  \includegraphics[width=0.49\textwidth]{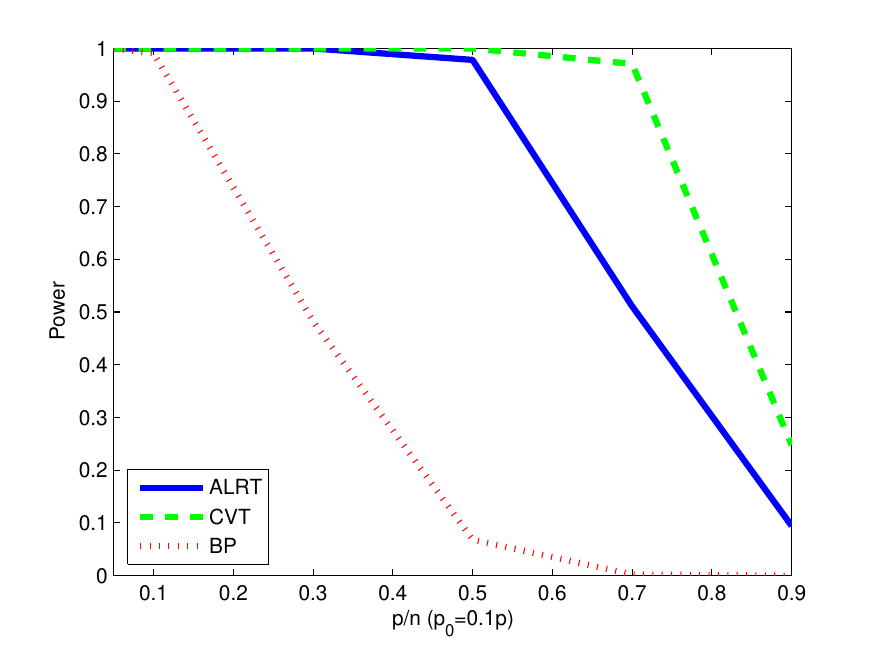}
\end{table}


\pagebreak
\subsection{Non-Gaussian design}
In this section, we investigate the performance of all tests applied to non-Gaussian design matrix. Here the entries of the design matrix $X$ are drawn from gamma distribution $G(2, 2)$ and uniform distribution $U(0, 1)$, respectively.  Except the design matrix $X$, the same setting with Section 3.1 is used to obtain the empirical sizes of all tests, and the same setting with Section 3.2 is used to obtain the empirical powers of all tests. All results are obtained using 5000 replications for each scenario.


\begin{table}
\centering
\footnotesize
\caption{Empirical sizes of the ALRT, CVT, White and BP tests for gamma and uniform designs with sample size $n=500$ and varying ratio $p/n$ (in \%).}
\label{table6}
\begin{tabular}{|c|cccc|cccc|}
\hline
\multirow{2}{*}{$p/n$} & \multicolumn{4}{c|}{Gamma design} & \multicolumn{4}{c|}{Uniform design} \\
 & ALRT & CVT & White & BP & ALRT & CVT & White & BP\\
\hline
0.05 & 4.68 & 5.80 & 0.22 & 4.12 & 4.48 & 4.84 & 0.14 & 4.18 \\
0.1 &  4.94 & 4.80 & NA & 4.28 & 5.08 & 4.84 & NA & 3.96 \\
0.3 & 5.14 & 5.62 & NA & 2.42 &  5.02 & 4.72 & NA & 2.10\\
0.5 & 5.60 & 5.76 & NA & 0.60 & 5.26 & 5.20 & NA & 0.68 \\
0.7 & 5.60 & 6.00 & NA & 0.08 & 4.86 & 4.86 & NA & 0.02 \\
0.9 & 5.72 & 6.20 & NA & 0 &  4.70 & 4.26 & NA & 0 \\
\hline
\end{tabular}
{\\ \em $^{*}$  \footnotesize{NA denotes $``$Not Applicable"} \hspace{7cm}}
\end{table}

\begin{table}
\centering
\footnotesize
\caption{Empirical powers of the ALRT, CVT and BP tests under the $p_0=0.1p$ level of heteroscedasticity for three error models with sample size $n=500$ and varying ratio $p/n$.}
\label{table7}
\begin{tabular}{|cc|ccc|ccc|}
\hline
\multicolumn{2}{|c|}{} & \multicolumn{3}{c|}{Gamma design} & \multicolumn{3}{c|}{Uniform design} \\
Setting & $p/n$ & ALRT & CVT & BP & ALRT & CVT & BP\\
\hline
\multirow{6}{*}{S1} & 0.05 & 1 & 1 & 0.9448 & 1 & 1 & 0.9998 \\
 \ & 0.1 & 1 & 1 & 0.8046 & 1 & 1 & 0.9754 \\
 \ & 0.3 & 1 & 1 & 0.4534 & 1 & 1 & 0.2964 \\
 \ & 0.5 & 1 & 1 & 0.2278 &  1 & 1 & 0.0412 \\
 \ & 0.7 & 0.9818 & 1 & 0.0304 & 0.9576 & 1 & 0.0032 \\
 \ & 0.9 & 0.2760 & 0.9978 & 0 & 0.2356 & 0.9408 & 0 \\
\hline
\multirow{6}{*}{S2} & 0.05 & 1 & 1 & 0.0366 & 1 & 1 & 0.0374 \\
 \ & 0.1 & 1 & 1 & 0.0404 & 1 & 1 & 0.0374 \\
 \ & 0.3 & 0.9222 & 0.9952 & 0.0230 & 0.9238 & 0.9942 & 0.0244 \\
 \ & 0.5 & 0.4550 & 0.8270 & 0.0056 & 0.4412 & 0.8046 & 0.0052 \\
 \ & 0.7 & 0.1410 & 0.3236 & 0.0004 & 0.1272 & 0.2914 & 0.0002 \\
 \ & 0.9 & 0.0582 & 0.0812 & 0 & 0.0542 & 0.0630 & 0 \\
\hline
\multirow{6}{*}{S3} & 0.05 & 1 & 1 & 1 & 0.6688 & 0.9138 & 0.9998 \\
 \ & 0.1 & 0.9992 & 1 & 0.9998 & 0.3910 & 0.7128 & 0.9202 \\
 \ & 0.3 & 0.3648 & 0.7522 & 0.5480 & 0.1006 & 0.1902 & 0.0804 \\
 \ & 0.5 & 0.1144 & 0.2454 & 0.0352 & 0.0712 & 0.0866 & 0.0116 \\
 \ & 0.7 & 0.0624 & 0.0946 & 0.0014 & 0.0542 & 0.0588 & 0.0002 \\
 \ & 0.9 & 0.0472 & 0.0664 & 0 &  0.0484 & 0.0456 & 0 \\
\hline
\end{tabular}
\end{table}

The empirical sizes and powers are presented in Tables~\ref{table6} and \ref{table7}, repectively. We find that there is no significant difference in terms of size and power between these two non-normal designs and the previously reported normal design. Similarly, the proposed ALRT and CVT perform well in all models and they are much better than the BP test. This suggests that the proposed tests are robust against the form or the distribution of the design matrix.

Simulation study is also conducted to explore the performance of these tests for fixed design. The design matrix $X_i$ is generated once and keep same for all replications. Even though our theoretic results are developed in the random design only, the inclusion of the fixed design simulation study is motivated by the believe that these asymptotic results of the ALRT and CVT tests remain useful in fixed design. As expected, the simulation results of empirical sizes and powers in fixed design are all similar to that in random design.  These results are omitted here for brevity. 

\subsection{Small sample sizes}

Simulation experiments are conducted to assess the performance of our tests for small sample size in a classical low-dimensional scenario. The DM test is compared here (notice that this test is not in Tables 1-4 since its implementation in a multivariate setting is unclear). Following the same set-up of \citet{dette1998testing}, the design points are chosen as $x_{i,n}=(i-1)/(n-1) (i=1, \ldots, n)$ and the sample sizes are $n=50, 25$. The BP and White tests are not considered in this part due to the fact that the design matrix in the setting considered here is nearly singular, so that the OLS estimates used by these two tests are unreliable. The considered model is $y=g(x)+0.25\sigma(x)$ with three settings:
\begin{itemize}
\item S1: $g(x)=1+\sin(x), \quad \sigma (x) = \exp (c_0 x)$,
\item S2: $g(x) =1+x, \quad \sigma (x) =  (1+c_0 \sin (10 x))^2 $,
\item S3: $g(x)=1+x, \quad \sigma (x) = (1+c_0 x)^2$,
\end{itemize}
with different values for $c_0$ (0, 0.5 and 1.0). $g(x)$ is the mean function, so the linear model tested here is one dimension. And $\sigma (x)$ is the error term. The case $c_0=0$ corresponds to the null hypothesis of homoscedasticity and the choices $c_0=0.5$ and 1 correspond to two alternatives. We calculated the proportion of rejections of the tests using 5000 simulations for each scenario. 

The empirical sizes and powers of the ALRT, CVT and DM tests are summarised in Table \ref{table5}.  The
results of the DM test are from Tables 1 and 2 of
\citet{dette1998testing}. In term of empirical size, the ALRT test is
conservative while the DM test is inclined to overestimate the size
and both of them are close to the nominal level 0.05. But the ALRT
test is more powerful than the DM test for settings S2 and S3. The
ALRT test has similar performance with the DM test in setting S1
because it runs the OLS estimation for the sinusoidal mean
function. The CVT only performs better than the DM test in term of
power in several cases. Therefore, although the ALRT test is
constructed under the high-dimensional framework, it is still a
competitive procedure in classical low-dimensional regression even
with a small sample size. This is also supported by the results for
the $p=5$ cases in Tables 1-4.

\begin{table}
\centering
\footnotesize
\caption{Empirical sizes and powers of the ALRT, CVT and DM tests under small sample size situation.}
\label{table5}
\begin{tabular}{|cc|ccc|ccc|}
\hline
\multicolumn{2}{|c|}{Settings} & \multicolumn{3}{c|}{$n=50$} & \multicolumn{3}{c|}{$n=25$} \\
Setting & $c_0$ & ALRT & CVT & DM & ALRT & CVT& DM \\
\hline
\multirow{3}{*}{S1} & 0 &0.047 & 0.032 &0.056 &0.046 & 0.020 &0.053  \\
\ & 0.5 &0.057& 0.072 &0.084 &0.051 & 0.036 &0.072 \\
\ & 1.0 &0.131 & 0.213 &0.148 &0.096 &0.093 &0.089 \\
\hline
\multirow{3}{*}{S2} & 0 &0.047 &0.032 &0.053 &0.046 &0.020 &0.052  \\
\ & 0.5 &0.601&0.485 &0.276 &0.292&0.206 &0.101 \\
\ & 1.0 &0.884 &0.792 &0.365 &0.570&0.406 &0.094 \\
\hline
\multirow{3}{*}{S3} & 0 &0.047 &0.032&0.054 &0.046&0.020 &0.053  \\
\ & 0.5 &0.094&0.135 &0.113 &0.077 & 0.061&0.076 \\
\ & 1.0 &0.250 & 0.331 &0.198 &0.152&0.145 &0.114 \\
\hline
\end{tabular}
\end{table}

\section{Real data analyses}
Though the newly proposed two tests seem to perform better than the classical ones in the simulation experiments, we now compare them on several real examples. According to the results of simulation, we use the BP test as the representation of classical tests.

\subsection{Low-dimensional data sets}
In order to check the performance of the proposed tests in low-dimensional situation, we analyse two data sets: the `bond yield' data and the `currency substitution' data\footnote{These two data sets are available in the R package `lmtest'.}. The bond yield data set is a multivariate quarterly time series from 1961(1) to 1975(4) (sample size $n=60$) with seven variables, including RAARUS (difference of interest rate on government and corporate bonds), MOOD (measure of consumer sentiment), EPI (index of employment pressure), EXP (interest rate expectations),
Y (joint proxies for the impact of callability) and K (artificial time series based on RAARUS). This data set is used to analyse the observed long-term bond yield differentials for different types of instruments. Two main works are \citet{cook1978impact} in which a linear regression of RAARUS on MOOD, EPI, EXP and RUS is fitted to find the factors contributed to the bond yield spreads, and \citet{yawitz1981measuring} in which another linear regression of RAARUS on MOOD, Y and K is fitted to see the effect of callability on bond yields. To investigate whether the homoscedasticity assumption in both models is justified, we applied the BP test, the ALRT test and the CVT test to each regression model. For the Cook-Hendershott model, we got three p-values of 0.5614 (BP), 0.3307 (ALRT) and 0.8333 (CVT). And the Yawitz-Marshall model yields three p-values of 0.3838 (BP), 0.7314 (ALRT) and 0.3885 (CVT). Hence, {\colred{} these tests show no evidence against the assumption of constant variability in both models.}

The currency substitution data set is a multivariate quarterly times series from 1960(4) to 1975(4) (sample size $n=61$) with four variables, including logCUS (logarithm of the ratio of Canadian holdings of Canadian dollar balances and Canadian holdings of U.S. dollar balances), Iu (yield on U.S. treasury bills), Ic (yield on Canadian treasury bills) and logY (logarithm of Canadian real gross national product). This data set is used to analyse the effect of flexible exchange rates and studied by \citet{bordo1982currency} where a linear model is fitted for logCUS using the other three variables as covariates. Their results were obtained under the assumption that the error variances are constant, which is supported by our proposed test: the ALRT test reports a p-value of 0.5779 and the CVT test reports a p-value of 0.1309. However, the p-value obtained by the BP test is 0.01324 which is inconsistent with the results in \citet{bordo1982currency}.

\subsection{High dimensional data sets}
In this part, we evaluate the performance of our proposed tests on two data sets with medium and high dimensions: the `international economic growth' data\footnote{Available on the web-site: https://stuff.mit.edu/~vchern/NBER/} \cite{belloni2011inference} and the `eminent-domain' data\footnote{Available on the web-site: http://faculty.chicagobooth.edu/christian.hansen/research/\#Code.} \citep{belloni2012sparse}. 

The international economic growth data set concerns the national growth rates in GDP per capita with $p=62$ covariates including education, science policies, strength of market institutions, trade openness, saving rates and others. The sample size is $n=90$. There is no unmeasurable underlying variable in this example so the regression model with all variables has constant disturbance. The CV and BP tests provide same conclusion by reporting p-values 0.5822 and 0.9436, respectively. \cite{belloni2011inference} used covariate selection procedure to select significant variables among 62 covariates and the variable $``$black market premium" is selected. Actually, this variable has important economic meaning as it characterises trade openness. Hence, the regression model without this variable will have heteroscedastic errors, and this conjecture is supported by our proposed CV test with a p-value of 0.0686 (compared with the value 0.5822 in full model). However, the p-value obtained by the BP test is 0.9808 which is inconsistent with the result in \cite{belloni2011inference}.

\citet{belloni2012sparse} studied the effects of federal appellate court decisions regarding eminent domain on a variety of economic outcomes. To explore the effect of the characteristics of three-judge panels on judicial decisions, the data set `eminent-domain' containing $p=147$ explanatory variables (gender, race, religion, political affiliation, etc.) is used with sample size $n=183$. The ratio of dimension and sample size is larger than 0.8. \citet{belloni2012sparse} argued that much heteroscedasticity exists in this data set and used heteroscedasticity consistent standard error estimator in their analysis. Applying ALRT and CVT tests on this data set, we found a p-value of $9.96\times 10^{-14}$ and $0$, respectively, strongly supporting these authors' approval. On the other hand, the BP test cannot detect the existence of heteroscedasticity by reporting a p-value of 0.3331. 

These results of real data sets analysis are consistent with the conclusion drawn from the simulation part that our newly proposed tests can provide accurate detection of heteroscedasticity under the medium or high dimensional situations, while the BP test, constructed under the low-dimensional scheme, not only cannot possess a correct size, but also loses power when heteroscedasticity exists.

\section{Conclusion and discussion}
For high-dimensional linear regression model, we propose two simple
and efficient tests to detect the existence of heteroscedasticity. The
asymptotic normalities of test statistics with simple form are
constructed under the assumption that the degree of freedom $k$ is
large compared to the sample size $n$ with $k/n\to c\in(0,1)$ as $n\to
\infty$ and are thus appropriate for analyzing  high-dimensional
data sets. Extensive Monte-Carlo experiments demonstrates the
superiority of our proposed tests over some popular existing methods
in terms of size and power.  The good performance of our tests is also
illustrated by several real data analyses. Surprisingly enough, these
high-dimensional tests  when used  in the tested low-dimensional
situations  also show a performance 
comparable to that of 
the existing classical tests which are designed specifically under 
low-dimensional scheme.  

There are still several avenues for future  research. For example, 
the
asymptotic results of the tests proposed here are based on the
normality assumption for both  the error  and the random design. 
It is highly valuable to  investigate the non-Gaussian
setting. Although we have shown some robustness of the proposed
procedures against non-Gaussian design in simulation experiments,
a thorough investigation is missing. It is however clear that new
theoretical tools will be needed to tackle with such non-Gaussian
setting. 

{\colred{}Lastly, our procedures rely on the OLS residuals, therefore have some limitations. First, it is required that $p<n$ even though both of them can be large. How to address the case where $p>n$ remains an open question. Second, it is well-known that the OLS estimates lack robustness against outliers. It is very likely that our tests possess same weakness.}

\bibliographystyle{plainnat}
\bibliography{ref_heter}

\begin{thebibliography}{33}
\providecommand{\natexlab}[1]{#1}
\providecommand{\url}[1]{\texttt{#1}}
\expandafter\ifx\csname urlstyle\endcsname\relax
  \providecommand{\doi}[1]{doi: #1}\else
  \providecommand{\doi}{doi: \begingroup \urlstyle{rm}\Url}\fi

\bibitem[Azzalini and Bowman(1993)]{azzalini1993use}
A.~Azzalini and A.~Bowman.
\newblock On the use of nonparametric regression for checking linear
  relationships.
\newblock \emph{Journal of the Royal Statistical Society. Series B
  (Methodological)}, 55\penalty0 (2):\penalty0 549--557, 1993.

\bibitem[Bean et~al.(2013)Bean, Bickel, El~Karoui, and Yu]{bean2013optimal}
D.~Bean, P.~J. Bickel, N.~El~Karoui, and B.~Yu.
\newblock Optimal m-estimation in high-dimensional regression.
\newblock \emph{Proceedings of the National Academy of Sciences}, 110\penalty0
  (36):\penalty0 14563--14568, 2013.

\bibitem[Belloni et~al.(2011)Belloni, Chernozhukov, and
  Hansen]{belloni2011inference}
A.~Belloni, V.~Chernozhukov, and C.~Hansen.
\newblock Inference for high-dimensional sparse econometric models.
\newblock \emph{Advances in Economics and Econometrics, 10th World Congress of
  Econometric Society}, 2011.

\bibitem[Belloni et~al.(2012)Belloni, Chen, Chernozhukov, and
  Hansen]{belloni2012sparse}
A.~Belloni, D.~Chen, V.~Chernozhukov, and C.~Hansen.
\newblock Sparse models and methods for optimal instruments with an application
  to eminent domain.
\newblock \emph{Econometrica}, 80\penalty0 (6):\penalty0 2369--2429, 2012.

\bibitem[Belloni et~al.(2014{\natexlab{a}})Belloni, Chernozhukov, and
  Hansen]{belloni2014high}
A.~Belloni, V.~Chernozhukov, and C.~Hansen.
\newblock High-dimensional methods and inference on structural and treatment
  effects.
\newblock \emph{The Journal of Economic Perspectives}, 28\penalty0
  (2):\penalty0 29--50, 2014{\natexlab{a}}.

\bibitem[Belloni et~al.(2014{\natexlab{b}})Belloni, Chernozhukov, and
  Wang]{belloni2014pivotal}
A.~Belloni, V.~Chernozhukov, and L.~Wang.
\newblock Pivotal estimation via square-root lasso in nonparametric regression.
\newblock \emph{The Annals of Statistics}, 42\penalty0 (2):\penalty0 757--788,
  2014{\natexlab{b}}.

\bibitem[Bordo and Choudhri(1982)]{bordo1982currency}
M.~D. Bordo and E.~U. Choudhri.
\newblock Currency substitution and the demand for money: some evidence for
  canada.
\newblock \emph{Journal of Money, Credit and Banking}, 14\penalty0
  (1):\penalty0 48--57, 1982.

\bibitem[Breusch and Pagan(1979)]{breusch1979simple}
T.~S. Breusch and A.~R. Pagan.
\newblock A simple test for heteroscedasticity and random coefficient
  variation.
\newblock \emph{Econometrica}, 47\penalty0 (5):\penalty0 1287--1294, 1979.

\bibitem[Cook and Weisberg(1983)]{cook1983diagnostics}
R.~D. Cook and S.~Weisberg.
\newblock Diagnostics for heteroscedasticity in regression.
\newblock \emph{Biometrika}, 70\penalty0 (1):\penalty0 1--10, 1983.

\bibitem[Cook and Hendershott(1978)]{cook1978impact}
T.~Q. Cook and P.~H. Hendershott.
\newblock The impact of taxes, risk and relative security supplies on interest
  rate differentials.
\newblock \emph{The Journal of Finance}, 33\penalty0 (4):\penalty0 1173--1186,
  1978.

\bibitem[Daye et~al.(2012)Daye, Chen, and Li]{daye2012high}
Z.~J. Daye, J.~Chen, and H.~Li.
\newblock High-dimensional heteroscedastic regression with an application to
  eqtl data analysis.
\newblock \emph{Biometrics}, 68\penalty0 (1):\penalty0 316--326, 2012.

\bibitem[Dette and Munk(1998)]{dette1998testing}
H.~Dette and A.~Munk.
\newblock Testing heteroscedasticity in nonparametric regression.
\newblock \emph{Journal of the Royal Statistical Society: Series B
  (Methodology)}, 60\penalty0 (4):\penalty0 693--708, 1998.

\bibitem[Diblasi and Bowman(1997)]{diblasi1997testing}
A.~Diblasi and A.~Bowman.
\newblock Testing for constant variance in a linear model.
\newblock \emph{Statistics \& Probability letters}, 33\penalty0 (1):\penalty0
  95--103, 1997.

\bibitem[El~Karoui et~al.(2013)El~Karoui, Bean, Bickel, Lim, and
  Yu]{el2013robust}
N.~El~Karoui, D.~Bean, P.~J. Bickel, C.~H. Lim, and B.~Yu.
\newblock On robust regression with high-dimensional predictors.
\newblock \emph{Proceedings of the National Academy of Sciences}, 110\penalty0
  (36):\penalty0 14557--14562, 2013.

\bibitem[Eubank and Thomas(1993)]{eubank1993detecting}
R.~L. Eubank and Will Thomas.
\newblock Detecting heteroscedasticity in nonparametric regression.
\newblock \emph{Journal of the Royal Statistical Society: Series B
  (Methodological)}, 55\penalty0 (1):\penalty0 145--155, 1993.

\bibitem[Ferrari and Cribari-Neto(2002)]{ferrari2002corrected}
S.~L.~P. Ferrari and F.~Cribari-Neto.
\newblock Corrected modified profile likelihood heteroskedasticity tests.
\newblock \emph{Statistics \& Probability letters}, 57\penalty0 (4):\penalty0
  353--361, 2002.

\bibitem[Godfrey(1996)]{godfrey1996some}
L.~G. Godfrey.
\newblock Some results on the glejser and koenker tests for heteroskedasticity.
\newblock \emph{Journal of Econometrics}, 72\penalty0 (1):\penalty0 275--299,
  1996.

\bibitem[Godfrey and Orme(1999)]{godfrey1999robustness}
L.~G. Godfrey and C.~D. Orme.
\newblock The robustness, reliabiligy and power of heteroskedasticity tests.
\newblock \emph{Econometric Reviews}, 18\penalty0 (2):\penalty0 169--194, 1999.

\bibitem[John(1971)]{john1971some}
S.~John.
\newblock Some optimal multivariate tests.
\newblock \emph{Biometrika}, 58\penalty0 (1):\penalty0 123--127, 1971.

\bibitem[Koenker(1981)]{koenker1981note}
R.~Koenker.
\newblock A note on studentizing a test for heteroscedasticity.
\newblock \emph{Journal of Econometrics}, 17\penalty0 (1):\penalty0 107--112,
  1981.

\bibitem[Koenker and Bassett(1982)]{koenker1982robust}
R.~Koenker and G.~Bassett.
\newblock Robust tests for heteroscedasticity based on regression quantiles.
\newblock \emph{Econometrica}, 50\penalty0 (1):\penalty0 43--61, 1982.

\bibitem[Lee(1992)]{lee1992heteroskedasticity}
B.~J. Lee.
\newblock A heteroskedasticity test robust to conditional mean
  misspecification.
\newblock \emph{Econometrica}, 60\penalty0 (1):\penalty0 159--171, 1992.

\bibitem[Mauchly(1940)]{mauchly1940significance}
J.~W. Mauchly.
\newblock Significance test for sphericity of a normal n-variate distribution.
\newblock \emph{The Annals of Mathematical Statistics}, 11\penalty0
  (2):\penalty0 204--209, 1940.

\bibitem[Muirhead(1982)]{muirhead82}
R.~J. Muirhead.
\newblock \emph{Aspects of Multivariate Statistical Theory}.
\newblock Wiley, 1982.

\bibitem[Newey and Powell(1987)]{newey1987asymmetric}
W.~K. Newey and J.~L. Powell.
\newblock Asymmetric least squares estimation and testing.
\newblock \emph{Econometrica}, 55\penalty0 (4):\penalty0 819--847, 1987.

\bibitem[Paul and Aue(2014)]{paul2014random}
D.~Paul and A.~Aue.
\newblock Random matrix theory in statistics: A review.
\newblock \emph{Journal of Statistical Planning and Inference}, 150:\penalty0
  1--29, 2014.

\bibitem[R{\'e}ffy(2005)]{reffy2005asymptotics}
J.~R{\'e}ffy.
\newblock \emph{Asymptotics of Random Unitaries}.
\newblock PhD thesis, Budapest University of Technology and Economics.
  http://www.math.bme.hu/\~{}reffyj/disszer.pdf., 2005.

\bibitem[Song and Gupta(1997)]{song1997}
D.~Song and A.~K. Gupta.
\newblock Lp-norm uniform distribution.
\newblock \emph{Proceedings of the American Mathematical Society}, 125\penalty0
  (2):\penalty0 595--601, 1997.

\bibitem[Su and Ullah(2013)]{su2013}
L.~Su and A.~Ullah.
\newblock A nonparametric goodness-of-fit-based test for conditional
  heteroskedasticity.
\newblock \emph{Econometric Theory}, 29\penalty0 (1):\penalty0 187--212, 2013.

\bibitem[Taddy(2013)]{taddy2013multinomial}
M.~Taddy.
\newblock Multinomial inverse regression for text analysis.
\newblock \emph{Journal of the American Statistical Association}, 108\penalty0
  (503):\penalty0 755--770, 2013.

\bibitem[White(1980)]{white1980heteroskedasticity}
H.~White.
\newblock A heteroskedasticity-consistent covariance matrix estimator and a
  direct test for heteroskedasticity.
\newblock \emph{Econometrica}, 48\penalty0 (4):\penalty0 817--838, 1980.

\bibitem[Yao et~al.(2015)Yao, Zheng, and Bai]{yao2015large}
J.~Yao, S.~Zheng, and Z.~Bai.
\newblock \emph{Large Sample Covariance Matrices and High-dimensional Data
  Analysis}.
\newblock Cambridge University Press, 2015.

\bibitem[Yawitz and Marshall(1981)]{yawitz1981measuring}
J.~B. Yawitz and W.~J. Marshall.
\newblock Measuring the effect of callability on bond yields.
\newblock \emph{Journal of Money, Credit and Banking}, 13\penalty0
  (1):\penalty0 60--71, 1981.

\end{thebibliography}

\appendix
\section{Technical proofs}

\def\bbX{\mathbf{X}} 
\def\bbV{\mathbf{V}}
\def\bbU{\mathbf{U}}
\def\bbI{\mathbf{I}}
According to (\ref{error_h1}), the OLS residuals are normal
distributed $\hat{\boldsymbol{\varepsilon}}\sim N(\mathbf{0}, \sigma^2
\mathbf{Q}_x)$, where $\mathbf{Q}_x=\bbI_n-\bbX(\bbX'\bbX)^{-1}\bbX'$ is a projection matrix of rank
$k=n-p$.
Let $\bbV=\bbX(\bbX'\bbX)^{-1/2}$. Since $\bbX$ has i.i.d. zero-mean normal
variables, it is easily seen that $\mathbf{A}\bbV$ has the same distribution as
$\bbV$ for any  $n\times n$ orthogonal matrix $\mathbf{A}$.
Therefore $\bbV$ is a
$p$-frame, that is, it is  distributed as $p$ columns of a $n\times n$
Haar matrix  \citep[Chapter 2]{muirhead82}.
Furthermore, 
since  $\mathbf{Q}_x = \bbI_n - \bbV\bbV'$, if we complement $\bbV$
to an orthogonal matrix $(\bbU,\bbV)$,  we have then $\mathbf{Q}_x=\bbU\bbU'$ and 
$\bbU$ becomes a $k$-frame ($k=n-p$) distributed as  $k$ columns of a $n\times n$
Haar matrix. 
Therefore, we have
\begin{eqnarray}\label{error_h2}
  \hat{\boldsymbol{\varepsilon}} = \mathbf{U} \mathbf{U}^\prime \boldsymbol{\varepsilon} = \mathbf{U} Z,
\end{eqnarray}
where $Z=\mathbf{U}^\prime \boldsymbol{\varepsilon}=(z_1 \ldots
z_k)^\prime \sim \mathcal{N}(\mathbf{0}, \sigma^2\mathbf{I}_k)$ under
the null hypothesis. 
Notice that despite the multiplication by $\bbU'$, $Z$ is indepedent
of $\bbU$ (since  its conditional distribution given $\bbU$ is independent
of $\bbU$). 
Rewrite $\mathbf{U}$ as
\begin{eqnarray}\label{UP}
\mathbf{U}= (\mathbf{u}_1, \ldots, \mathbf{u}_k)=\left(\begin{array}{c} \mathbf{v}_1^\prime \\ \vdots \\ \mathbf{v}_n^\prime \end{array}\right) = \left( \begin{array}{ccc} v_{11} & \cdots & v_{1k} \\ \vdots & v_{ij} & \vdots \\ v_{n1} & \cdots & v_{nk}\end{array}\right).
\end{eqnarray}
Then the components (residuals) $\{\hat{\varepsilon}_i\}_{1 \leq i \leq n}$ of $\hat{\boldsymbol{\varepsilon}}=z_1 \mathbf{u}_1 +\cdots + z_k \mathbf{u}_k$ can be expressed as
\begin{eqnarray}\label{error_h3}
\hat{\varepsilon}_i=\mathbf{v}_i^\prime Z=\sum_{j=1}^k v_{ij} z_j,\quad \textrm{for} \  i=1, \ldots, n.
\end{eqnarray}

The proofs below rely on precise properties of the $k$-frame 
$\mathbf{U}$ ($k$ columns of a Haar matrix). 
 These useful properties are recalled in the next
section, followed by the proofs of the main results of the paper. 

\subsection{Haar matrix and related results}
Here we present some important results of Haar matrix that will be used afterwards. First, the elements $\{v_{ij}\}_{1\leq j \leq k}$ of $\mathbf{v}_i$ in (\ref{UP}) have the same marginal distribution by symmetry and the square of each element has a beta distribution with parameter $\left(\frac{1}{2}, \frac{n-1}{2}\right)$, see for example \cite{reffy2005asymptotics}. Their (marginal) moments are thus easily known. For example, we have
\begin{eqnarray}\label{EMP1}
&&E\left(v_{11}^2\right) = \frac{1}{n}; \quad E\left( v_{11}^4 \right) = \frac{3}{n(n+2)}; \nonumber\\
&&E\left( v_{11}^6 \right) = \frac{15}{n(n+2)(n+4)}; \quad E\left( v_{11}^8 \right) = \frac{105}{n(n+2)(n+4)(n+6)}.
\end{eqnarray}
In addition, these elements are not independent, but weakly correlated, the moments of their products can be obtained using the following facts of an orthogonal matrix:
\begin{eqnarray*}
&&[1].\quad \sum_{i=1}^n v_{ij}^2 =1 \quad \ 1\leq j \leq k;\\
&&[2].\quad \sum_{i=1}^n v_{ij}v_{ij^\prime}=0, \quad 1\leq j\neq j^\prime \leq k.
\end{eqnarray*}
Meanwhile, by Lemma 3.4 of \cite{reffy2005asymptotics}, for positive integers $t_1, \ldots, t_s$,
\begin{eqnarray*}
E\left( v_{i_1j_1}^{t_1}\cdots v_{i_sj_s}^{t_s} \right)=0,
\end{eqnarray*}
if $\sum_{i_\alpha=u} t_\alpha$ is odd for some $1\leq u\leq n$, or $\sum_{j_\alpha=w} t_\alpha$ is odd for some $1\leq w \leq n$.
This leads to the following list of cross-moment identities that will be used in upcoming proofs. The cross-moments of two elements in a same row (or same column) are as follows
\begin{eqnarray}\label{EMP2}
&& E \left(v_{11}^2 v_{12}^2\right) = \frac{1}{n(n+1)};\nonumber\\
&& E\left(v_{11}^4 v_{12}^2\right) = \frac{3}{n(n+2)(n+4)};\nonumber \\
&& E\left( v_{11}^6 v_{12}^2\right) = \frac{15}{n(n+2)(n+4)(n+6)};\nonumber\\
&& E\left( v_{11}^4 v_{12}^4\right) = \frac{9n-6}{n(n+2)^2(n+4)(n+6)}.
\end{eqnarray}
The cross-moments of two elements in different rows and different columns are
\begin{eqnarray}\label{EMP3}
&&E\left( v_{11}^4 v_{22}^2\right) = \frac{3(n+3)}{n(n-1)(n+2)(n+4)};\nonumber\\
&&E\left( v_{11}^4 v_{22}^4\right)  = \frac{9 n^2 +81n +222}{n(n-1)(n+2)^2(n+4)(n+6)}.
\end{eqnarray}
The cross-moments of three elements in a same row (or same column) are
\begin{eqnarray}\label{EMP4}
&& E\left(v_{11}^2 v_{12}^2 v_{13}^2\right) = \frac{1}{n(n+2)(n+4)};\nonumber\\
&& E\left(v_{11}^4 v_{12}^2 v_{13}^2\right) = \frac{3(n^2+4)}{n(n-2)(n+2)^2(n+4)(n+6)}.
\end{eqnarray}
The cross-moments of three elements in different rows or different columns are
\begin{eqnarray}\label{EMP5}
&&E\left( v_{11}^2 v_{12}^2 v_{22}^2\right) = \frac{n+1}{n(n-1)(n+2)(n+4)};\nonumber\\[1mm]
&&E\left(v_{11}^2 v_{12}^2 v_{23}^2\right) = \frac{n+3}{n(n-1)(n+2)(n+4)};\nonumber\\[1mm]
&&E\left( v_{11}^4 v_{21}^2 v_{22}^2\right) = \frac{3 n^2+15n+42}{n(n-1)(n+2)^2(n+4)(n+6)};\nonumber\\[1mm]
&&E\left( v_{11}^4 v_{22}^2 v_{23}^2\right) = \frac{3n^3 +21n^2 +12n-156}{n(n-1)(n-2)(n+2)^2(n+4)(n+6)}.
\end{eqnarray}
The cross-moments of four elements in the same row (or same column) is
\begin{eqnarray}\label{EMP6}
 E\left( v_{11}^2 v_{12}^2 v_{13}^2 v_{14}^2 \right) = \frac{n^3-3n^2-4n-60}{n(n-2)(n-3)(n+2)^2(n+4)(n+6)}.
\end{eqnarray}
The cross-moments of four elements in different rows or different columns are
\begin{eqnarray}\label{EMP7}
&&E\left( v_{11}^2 v_{12}^2 v_{21}^2 v_{22}^2\right) = \frac{n^3 +3n^2 -4n-36}{n(n-1)(n-2)(n+2)^2(n+4)(n+6)};\nonumber\\[1mm]
&&E\left( v_{11}^2 v_{12}^2 v_{21}^2 v_{23}^2\right) = \frac{n^4 +3n^3 -10n^2 -36n+96}{n(n-1)(n^2-4)^2(n+4)(n+6)};\nonumber\\[1mm]
&&E\left( v_{11}^3 v_{12} v_{21} v_{22}\right) = -\frac{3}{n(n-1)(n+2)(n+4)};\nonumber\\[1mm]
&&E\left( v_{11}^2 v_{12}^2 v_{23}^2 v_{24}^2\right) = \frac{n^4 +5n^3 -10n^2 -44n+120}{n(n-1)(n^2-4)^2(n+4)(n+6)};\nonumber\\[1mm]
&&E\left( v_{11}^3 v_{12} v_{21}^3 v_{22}\right) = -\frac{9n-6}{n(n-1)(n+2)^2(n+4)(n+6)};\nonumber\\[1mm]
&& E\left( v_{11}^3 v_{12} v_{21} v_{22}^3\right) \approx E\left( v_{11}^3 v_{12} v_{21}^3 v_{22}\right).
\end{eqnarray}
The last approximate expression is due to the symmetry between the elements. Finally, some useful cross-moments of more than four elements in different rows or different columns are as follows:
\begin{eqnarray}\label{EMP8}
&&E\left( v_{11}^2 v_{12} v_{13} v_{22} v_{23}\right) = -\frac{1}{n(n-1)(n+2)(n+4)};\nonumber\\[1mm]
&&E\left(v_{11}^3 v_{12}v_{21} v_{22} v_{23}^2\right) = -\frac{3n^2 -6n -48}{n(n-1)(n-2)(n+2)^2(n+4)(n+6)};\nonumber\\[1mm]
&&E\left( v_{11}^2 v_{12} v_{13} v_{21}^2 v_{22} v_{23}\right) = -\frac{n^3 -6n^2 +20n -48}{n(n-1)(n^2-4)^2(n+4)(n+6)};\nonumber\\[1mm]
&&E\left( v_{11} v_{12} v_{13} v_{14} v_{21} v_{22} v_{23} v_{24}\right) = \frac{3(n^3 -6n^2 +20n-48)}{n(n-1)(n-3)(n^2-4)^2(n+4)(n+6)};\nonumber\\[1mm]
&&E\left( v_{11}^2 v_{12} v_{13} v_{22} v_{23} v_{24}^2\right)\approx E\left( v_{11}^2 v_{12} v_{13} v_{21}^2 v_{22} v_{23}\right).
\end{eqnarray}

Next, by Theorem 2.1 of \cite{song1997}, the joint distribution of all the squared elements in $\mathbf{v}_i $ in (\ref{UP}) $(1\leq i \leq n)$ is known to be
\begin{eqnarray}
\left( v_{i1}^2, v_{i2}^2, \ldots, v_{ik}^2 \right) \sim D_k \left( \frac{1}{2}, \cdots, \frac{1}{2}; \frac{n-k}{2} \right),
\end{eqnarray}
where $D_k(\alpha_1,\ldots, \alpha_k; \alpha_{k+1})$ is the Dirichlet distribution with positive parameters $(\alpha_1,\ldots, \alpha_k; \alpha_{k+1})$. Therefore, $||\mathbf{v}_i||^2 = v_{i1}^2 +\cdots + v_{ik}^2$ has beta distribution with parameters $\left( \frac{k}{2}, \frac{n-k}{2}\right)$. It follows that
\begin{eqnarray}\label{lemma3con1}
E\left( ||\mathbf{v}_i||^2 \right)=c_n, \quad var\left( ||\mathbf{v}_i||^2 \right)=\frac{2c_n(1-c_n)}{n+2},
\end{eqnarray}
\begin{eqnarray}\label{lemma3con2}
cov\left( ||\mathbf{v}_i||^2,  ||\mathbf{v}_j||^2\right) =\frac{2c_n(c_n-1)}{(n-1)(n+2)}, \quad \textrm{for} \ i\neq j,
\end{eqnarray}
\begin{eqnarray}\label{logterm}
E\left(\log||\mathbf{v}_i||^2 \right) = \log c_n +\frac{1}{n} -\frac{1}{k}+O\left(\frac{1}{n^2}\right)+O\left(\frac{1}{k^2}\right),
\end{eqnarray}
\begin{eqnarray}\label{onelogterm}
E\left(||\mathbf{v}_i||^2 \log||\mathbf{v}_i||^2 \right)
= c_n \left( \log c_n +\frac{1}{k} -\frac{1}{n} \right) +O\left(\frac{1}{n^2}\right)+O\left(\frac{1}{k^2}\right),
\end{eqnarray}
\begin{eqnarray}\label{logtermsquare}
E \left(\log ||\mathbf{v}_i||^2\right)^2
&=& \left( \log c_n \right)^2+2 \left(\frac{1}{n} -\frac{1}{k} \right)  \log c_n + \frac{2}{k} -\frac{2}{n}\nonumber\\[1mm]
 && \ +O\left(\frac{1}{n^2}\right) +O\left(\frac{1}{k^2}\right)+ O\left(\frac{1}{nk}\right),
\end{eqnarray}
Next, we derive the asymptotic limits for some joint distributions of $\left\{||\mathbf{v}_i||^2, \log ||\mathbf{v}_i||^2\right\}$.
\begin{lemma}\label{lemma3}
Based on the above results on $||\mathbf{v}_i||^2, 1\leq i \leq n$, as $k, n\to \infty$, we have
\begin{eqnarray}\label{lemma3result}
\sqrt{\frac{n}{2c_n(1-c_n)}} \left( \begin{array}{c} ||\mathbf{v}_1||^2 -c_n \\ ||\mathbf{v}_2||^2 -c_n \end{array}\right) \overset{\mathcal{D}}{\longrightarrow} \mathcal{N}\left(\mathbf{0}, \mathbf{I}_2 \right).
\end{eqnarray}
\end{lemma}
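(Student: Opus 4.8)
The plan is to establish the bivariate central limit theorem through the Cram\'er--Wold device, after first confirming that the chosen normalisation yields the identity covariance. Using the exact first two moments already recorded in (\ref{lemma3con1})--(\ref{lemma3con2}), the normalised vector has variances $\frac{n}{2c_n(1-c_n)}\cdot\frac{2c_n(1-c_n)}{n+2}=\frac{n}{n+2}\to 1$ and off-diagonal covariance $\frac{n}{2c_n(1-c_n)}\cdot\frac{2c_n(c_n-1)}{(n-1)(n+2)}=-\frac{n}{(n-1)(n+2)}\to 0$. Hence the target covariance $\mathbf{I}_2$ is correct and the two coordinates asymptotically decorrelate; what remains is to show that for every fixed $(s,t)$ the scalar $s(\|\mathbf{v}_1\|^2-c_n)+t(\|\mathbf{v}_2\|^2-c_n)$ is asymptotically Gaussian.

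I would work directly with the Gaussian design rather than with the frame $\mathbf{U}$. Since $\mathbf{Q}_x=\mathbf{I}_n-\mathbf{X}(\mathbf{X}'\mathbf{X})^{-1}\mathbf{X}'$, one has $\|\mathbf{v}_i\|^2=(\mathbf{Q}_x)_{ii}=1-\mathbf{x}_i'(\mathbf{X}'\mathbf{X})^{-1}\mathbf{x}_i$, where $\mathbf{x}_i=X_i'$ is the $i$-th design row. By the Sherman--Morrison identity this equals $\|\mathbf{v}_i\|^2=1/(1+S_i)$ with $S_i=\mathbf{x}_i'\mathbf{B}_{(i)}^{-1}\mathbf{x}_i$ and $\mathbf{B}_{(i)}=\mathbf{X}_{(i)}'\mathbf{X}_{(i)}$ the Gram matrix of the design with row $i$ removed. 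Marginally $S_i$ is the ratio $\chi^2_p/\chi^2_k$ of two independent chi-squares, which is asymptotically normal exactly when both $p$ and $k$ diverge; the delta method through $g(x)=1/(1+x)$ then reproduces the marginal law of $\|\mathbf{v}_i\|^2$ with the variance in (\ref{lemma3con1}). This already fixes the scope: both shape parameters of the underlying $\mathrm{Beta}(k/2,(n-k)/2)$ must grow, i.e. $k\to\infty$ \emph{and} $p=n-k\to\infty$, which is guaranteed in the proportional regime $p\simeq(1-c)n$.

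For the linear combination I would condition on the design rows $3,\dots,n$, that is on $\mathbf{C}=\mathbf{X}_{(12)}'\mathbf{X}_{(12)}$, a $\mathrm{Wishart}_p(n-2,\mathbf{I})$ matrix independent of $\mathbf{x}_1,\mathbf{x}_2$. Expanding $\mathbf{B}_{(1)}^{-1}=(\mathbf{C}+\mathbf{x}_2\mathbf{x}_2')^{-1}$ by Sherman--Morrison writes $S_1$ as the quadratic form $\mathbf{x}_1'\mathbf{C}^{-1}\mathbf{x}_1$ minus a cross term proportional to $(\mathbf{x}_1'\mathbf{C}^{-1}\mathbf{x}_2)^2$, and symmetrically for $S_2$. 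Given $\mathbf{C}$, the two leading forms $\mathbf{x}_1'\mathbf{C}^{-1}\mathbf{x}_1$ and $\mathbf{x}_2'\mathbf{C}^{-1}\mathbf{x}_2$ are \emph{conditionally independent} and each is conditionally asymptotically normal as $p\to\infty$, by the central limit theorem for Gaussian quadratic forms; the Lindeberg ratio $\lambda_{\max}(\mathbf{C}^{-1})/\sqrt{\operatorname{tr}\mathbf{C}^{-2}}\to 0$ holds because $\liminf c_n>0$ keeps $\mathbf{C}/n$ well conditioned. Integrating out $\mathbf{C}$ and transferring through $g$ then gives asymptotic normality of $s(\|\mathbf{v}_1\|^2-c_n)+t(\|\mathbf{v}_2\|^2-c_n)$, with normalised variance $s^2+t^2$, and the Cram\'er--Wold device concludes.

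The main obstacle is the dependence bookkeeping in the last step: one must show that both the Sherman--Morrison cross terms in $(\mathbf{x}_1'\mathbf{C}^{-1}\mathbf{x}_2)^2$ and the common fluctuation of $\operatorname{tr}\mathbf{C}^{-1}$ (the only channel coupling $S_1$ and $S_2$) are of strictly smaller order than the individual conditional fluctuations, so that the limiting covariance stays diagonal. This is precisely the content of the exact correlation $-1/(n-1)\to 0$, but to derive it one needs uniform control of $\operatorname{tr}\mathbf{C}^{-1}$, $\operatorname{tr}\mathbf{C}^{-2}$ and $\lambda_{\max}(\mathbf{C}^{-1})$ for the inverse Wishart, and this is where the twin divergence of $p$ and $k$ is essential: were $p$ to remain bounded, $S_i$ would converge to a scaled $\chi^2_p$ and the Gaussian conclusion would fail. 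An alternative that avoids the conditioning is to quote a central limit theorem for the diagonal of a matrix-variate $\mathrm{Beta}$ (equivalently, of a projection onto a uniformly random $k$-dimensional subspace), but the route above keeps the argument self-contained.
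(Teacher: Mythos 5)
Your proposal is correct, and it is a genuinely different — and more substantive — route than the paper's. The paper's proof of Lemma~\ref{lemma3} consists of exactly your opening step and nothing more: it normalises by $\sqrt{n/(2c_n(1-c_n))}$, reads the diagonal entries and the $O(1/n)$ off-diagonal entry of the covariance off (\ref{lemma3con1})--(\ref{lemma3con2}), and then simply asserts that ``the multivariate central limit theorem states'' the bivariate Gaussian limit, with no justification of why $(\|\mathbf{v}_1\|^2,\|\mathbf{v}_2\|^2)$ — which is the diagonal of a Haar-distributed projection, not a normalised sum of i.i.d.\ vectors in any evident way — should be asymptotically normal at all. Your argument supplies the missing substance: the leverage representation $\|\mathbf{v}_i\|^2=1/(1+S_i)$ with $S_i=\mathbf{x}_i'\mathbf{B}_{(i)}^{-1}\mathbf{x}_i$, the leave-two-out conditioning on the Wishart matrix $\mathbf{C}$, the conditional quadratic-form CLT under $\lambda_{\max}(\mathbf{C}^{-1})/\sqrt{\operatorname{tr}\mathbf{C}^{-2}}\to 0$, and the Cram\'er--Wold reduction are all sound, and the bookkeeping you flag is indeed routine given $\liminf c_n>0$: the cross term is $O_p(1/n)$, the fluctuation of $\operatorname{tr}\mathbf{C}^{-1}$ is $O_p(1/n)$ by the linear-spectral-statistics CLT for Wishart matrices, and both are $o(\sqrt{p}/n)$, the conditional standard deviation of the leading quadratic forms. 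What your route buys, beyond an actual proof, is the correct scope: as you note, the Gaussian limit requires \emph{both} Beta parameters to diverge, i.e.\ $p\to\infty$ as well as $k\to\infty$; for bounded $p$ the normalised $\|\mathbf{v}_1\|^2-c_n$ converges in law to $(p-\chi_p^2)/\sqrt{2p}$, so the lemma as literally stated (``as $k,n\to\infty$'', under Assumption (d), which permits fixed $p$) fails — a limitation the paper's one-line appeal to the CLT conceals. This does not propagate downstream, since the paper exploits Lemma~\ref{lemma3} only through the moment expansions (\ref{twolog})--(\ref{onelog}), which follow from Beta moments regardless; but your version of the statement is the one that is true, and your remark should be recorded as a needed hypothesis.
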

\begin{proof}
For $||\mathbf{v}_i||^2, 1\leq i \leq n$, the multivariate central limit theorem states that
\begin{eqnarray*}
\mathbf{\Sigma}_0^{-1/2} \cdot \sqrt{n}\left( \begin{array}{c} ||\mathbf{v}_1||^2 -c_n \\ ||\mathbf{v}_2||^2 -c_n \end{array}\right) \overset{\mathcal{D}}{\longrightarrow}\mathcal{N}\left(\mathbf{0}, \mathbf{I}_2 \right),
\end{eqnarray*}
where
\begin{eqnarray*}
\mathbf{\Sigma}_0=\left( \begin{array}{cc} n\cdot var(||\mathbf{v}_1||^2) & n \cdot cov(||\mathbf{v}_1||^2, ||\mathbf{v}_2||^2) \\ n \cdot cov(||\mathbf{v}_1||^2, ||\mathbf{v}_2||^2) & n\cdot var(||\mathbf{v}_2||^2)\end{array}\right).
\end{eqnarray*}
By the previous results (\ref{lemma3con1}) and (\ref{lemma3con2}), we obtain that
\begin{eqnarray*}
&&n\cdot var(||\mathbf{v}_1||^2)= n\cdot var(||\mathbf{v}_2||^2)=2c_n(1-c_n),\\
&&n \cdot cov(||\mathbf{v}_1||^2, ||\mathbf{v}_2||^2)= \frac{2c_n(c_n-1)}{n} \to 0 \quad \textrm{as} \ n\to \infty.
\end{eqnarray*}
Then, Lemma \ref{lemma3} follows.
\end{proof}

There are two corollaries (easy consequences) of (\ref{lemma3result}) by delta method:
\begin{eqnarray*}
\sqrt{n} \left( \begin{array}{c} \left(\sqrt{2c_n(1-c_n)}\right)^{-1}\left(||\mathbf{v}_1||^2 -c_n\right) \\ \left(\sqrt{2(1-c_n)/c_n}\right)^{-1} \left(\log \left(||\mathbf{v}_2||^2\right) -\log c_n\right)\end{array}\right) \overset{\mathcal{D}}{\longrightarrow} \mathcal{N}\left(\mathbf{0}, \mathbf{I}_2  \right),
\end{eqnarray*}
and
\begin{eqnarray*}
\sqrt{nc_n/2(1-c_n)} \left(\begin{array}{c} \log \left(||\mathbf{v}_1||^2\right) -\log c_n \\ \log \left(||\mathbf{v}_2||^2\right) -\log c_n \end{array}\right) \overset{\mathcal{D}}{\longrightarrow} \mathcal{N}\left(\mathbf{0}, \mathbf{I}_2\right).
\end{eqnarray*}
Then, by these two corollaries, we obtain the following useful results
\begin{eqnarray}\label{twolog}
E \left(\log||\mathbf{v}_1||^2 \log||\mathbf{v}_2||^2 \right) 
&=& \left( \log c_n \right)^2+2 \left(\frac{1}{n} -\frac{1}{k} \right)  \log c_n +O\left(\frac{1}{n^2}\right) \nonumber\\
&& \quad +O\left(\frac{1}{k^2}\right) +O\left(\frac{1}{nk}\right),
\end{eqnarray}
\begin{eqnarray}\label{onelog}
E \left( ||\mathbf{v}_1||^2 \log||\mathbf{v}_2||^2 \right) = c_n\left(\log c_n+\frac{1}{n}-\frac{1}{k}\right)+O \left(\frac{1}{n^2}\right) +O\left(\frac{1}{k^2}\right).
\end{eqnarray}
Notice that the crucial condition $\liminf c_n >0$ in Assumption (d) is here used to ensure the well-definiteness of the centering term $\log c_n$.

\subsection{Proof of Lemma \ref{joint1}}
Recall that $\hat{\boldsymbol{\varepsilon}}=\mathbf{U}\mathbf{Z}$ is
distributed as a degenerated $p$-dimensional Gaussian vector of rank
$k=n-p$. Therefore, by standard central limit theory  $\left( \sum_{i=1}^n
  \hat{\varepsilon}_i^2, \sum_{i=1}^n \log \hat{\varepsilon}_i^2
\right)$ is asymptotically Gaussian after suitable centering and
normalization when $k\to \infty$. It remains to determine their
limiting mean and variance-covariances. 
 
\emph{\bf Moments of $\sum_{i=1}^n \hat{\varepsilon}_i^2$.}
According to (\ref{error_h2}) $\hat{\boldsymbol{\varepsilon}}=\mathbf{U}\mathbf{Z}$, then
\begin{eqnarray}
\sum_{i=1}^n \hat{\varepsilon}_i^2=\hat{\boldsymbol{\varepsilon}}^\prime \hat{\boldsymbol{\varepsilon}}=\mathbf{Z}^\prime \mathbf{U}^\prime  \mathbf{U}\mathbf{Z}=\mathbf{Z}^\prime \mathbf{Z}=\chi_k^2,
\end{eqnarray}
is a chi-square distributed random variable with degree of freedom $k$ due to $\mathbf{U}^\prime  \mathbf{U}=\mathbf{I}_k$. Therefore, the expectation and variance of $\sum_{i=1}^n \hat{\varepsilon}_i^2$ are
\begin{eqnarray}
E\left(\sum_{i=1}^n \hat{\varepsilon}_i^2\right)=k, \quad var\left(\sum_{i=1}^n \hat{\varepsilon}_i^2\right)=2k.
\end{eqnarray}

\emph{\bf Moments of $\sum_{i=1}^n \log \hat{\varepsilon}_i^2$.}
By equation (\ref{error_h3}), when given the vector $\mathbf{v}_1$, $\hat{\varepsilon}_i$ is normally distributed with zeros mean and the variance is $||\mathbf{v}_1||$, which is the $L_2$-norm of $\mathbf{v}_1$. Denote that $\hat{\varepsilon}_i=||\mathbf{v}_i||\eta_i$, where $\eta_i$ is standard normal distributed.

The expectation of $\sum_{i=1}^n \log \hat{\varepsilon}_i^2$ is calculated as follows
\begin{eqnarray}
E\left( \sum_{i=1}^n \log \hat{\varepsilon}_i^2 \right) 
= n E \left[ E \left( \log (||\mathbf{v}_1||^2 \eta_1^2)\big|\mathbf{v}_1  \right) \right]
= n E\left[ -\gamma -\log 2 +\log(||\mathbf{v}_1||^2) \right],
\end{eqnarray}
and by the previous result (\ref{logterm}), we obtain
\begin{eqnarray}
M_2=E\left( \sum_{i=1}^n \log \hat{\varepsilon}_i^2 \right) = n\left( \log c_n - \gamma - \log 2 +\frac{1}{n} -\frac{1}{k} +O\left(\frac{1}{n^2}\right) +O\left(\frac{1}{k^2}\right) \right).
\end{eqnarray}

The variance of $\sum_{i=1}^n \log \hat{\varepsilon}_i^2$ is calculated as follows
\begin{eqnarray}\label{var_log}
var \left( \sum_{i=1}^n \log \hat{\varepsilon}_i^2 \right) &=& E \left( \sum_{i=1}^n \log \hat{\varepsilon}_i^2 \right)^2 - E^2\left( \sum_{i=1}^n \log \hat{\varepsilon}_i^2 \right) \nonumber \\[1mm]
&=& n E\left( \log \hat{\varepsilon}_1^2 \right)^2 + n(n-1) E \left(  \log \hat{\varepsilon}_1^2\cdot \log \hat{\varepsilon}_2^2 \right) - M_2^2,
\end{eqnarray}
where $E\left( \log \hat{\varepsilon}_1^2 \right)^2$ is obtained by the previous results (\ref{logterm}) and (\ref{logtermsquare})
\begin{eqnarray}\label{var_log1}
&&E\left( \log \hat{\varepsilon}_1^2 \right)^2\nonumber \\
&&= \frac{\pi^2}{2}+(\log 2)^2 +\left(\log c_n\right)^2  +2\gamma \log 2 -2(\gamma +\log 2)\left(\log c_n +\frac{1}{n}-\frac{1}{k}\right) \nonumber \\
&& \quad +\gamma^2+ 2\left(\frac{1}{n}-\frac{1}{k}\right)\log c_n +\frac{2}{k}-\frac{2}{n}+O\left(\frac{1}{n^2}\right) +O\left(\frac{1}{k^2}\right) +O\left(\frac{1}{nk}\right),
\end{eqnarray}
and
\begin{eqnarray*}
E \left(  \log \hat{\varepsilon}_1^2\cdot \log \hat{\varepsilon}_2^2 \right)
&=& E \Big\{ \log||\mathbf{v}_1||^2 \log||\mathbf{v}_2||^2 + (\gamma +\log 2)^2 \nonumber \\
&& \quad \quad -(\gamma +\log 2) \left(\log||\mathbf{v}_1||^2 + \log||\mathbf{v}_2||^2 \right) \Big\},
\end{eqnarray*}
and by the previous results (\ref{logterm}) and (\ref{twolog}), we obtain
\begin{eqnarray}\label{var_log2}
&&E \left(  \log \hat{\varepsilon}_1^2\cdot \log \hat{\varepsilon}_2^2 \right) \nonumber \\
&&= (\log c_n)^2 + 2 \left(\frac{1}{n}-\frac{1}{k}\right)\log c_n -2(\gamma +\log 2) \left( \log c_n +\frac{1}{n} -\frac{1}{k}\right) \nonumber \\
&&\quad  +(\gamma +\log 2)^2  +O\left(\frac{1}{n^2}\right) + O\left(\frac{1}{k^2}\right) +O\left(\frac{1}{nk}\right).
\end{eqnarray}
Then, we get the variance by substituting (\ref{var_log1}) and (\ref{var_log2}) in (\ref{var_log})
\begin{eqnarray}
var \left( \sum_{i=1}^n \log \hat{\varepsilon}_i^2 \right)=n \left( \frac{\pi^2}{2} +\frac{2}{c_n}-2 +O\left(\frac{1}{n}\right) +O\left(\frac{1}{k}\right)\right).
 \end{eqnarray}

The covariance of $\sum_{i=1}^n \hat{\varepsilon}_i^2$ and $\sum_{i=1}^n \log\hat{\varepsilon}_i^2$ is
\begin{eqnarray}\label{cov_log}
cov\left(\sum_{i=1}^n \hat{\varepsilon}_i^2, \sum_{i=1}^n \log\hat{\varepsilon}_i^2\right) = n E \left( \hat{\varepsilon}_1^2 \log \hat{\varepsilon}_1^2 \right) + n(n-1) E \left(\hat{\varepsilon}_1^2 \log \hat{\varepsilon}_2^2 \right) - kM_2.
\end{eqnarray}
By the previous results (\ref{lemma3con1}) and (\ref{onelogterm}), we obtain
\begin{eqnarray}\label{cov_log1}
E \left( \hat{\varepsilon}_1^2 \log \hat{\varepsilon}_1^2 \right) = c_n \left( \log c_n +2-\gamma -\log2 +\frac{1}{n}-\frac{1}{k} \right) +O\left(\frac{1}{n^2}\right) +O\left(\frac{1}{k^2}\right),
\end{eqnarray}
and by the previous results (\ref{lemma3con1}) and (\ref{onelog}), we have
\begin{eqnarray}\label{cov_log2}
E \left(\hat{\varepsilon}_1^2 \log \hat{\varepsilon}_2^2 \right) = c_n\left(\log c_n-\gamma \log 2 +\frac{1}{n} -\frac{1}{k}\right)+O \left(\frac{1}{n^2}\right) +O \left(\frac{1}{k^2}\right).
\end{eqnarray}
Then, we get the covariance by substituting (\ref{cov_log1}) and (\ref{cov_log2}) in (\ref{cov_log})
\begin{eqnarray}
cov\left(\sum_{i=1}^n \hat{\varepsilon}_i^2, \sum_{i=1}^n \log\hat{\varepsilon}_i^2\right) =n \left(2 +O(1/n)+O(1/k)\right).
\end{eqnarray}
The proof of Lemma 1 is complete.

\subsection{Proof of Theorem \ref{T1}}
Define two sequences $X_n$ and $Y_n$ as
\begin{eqnarray*}
\left(\begin{array}{c} X_n \\ Y_n \end{array}\right) = n^{-1/2} \left( \begin{array}{l} \sum_{i=1}^n \hat{\varepsilon}_i^2 -n\cdot c_n\\ \sum_{i=1}^n \log \hat{\varepsilon}_i^2 -n\cdot  \left(\log c_n-\gamma -\log2\right) \end{array}\right).
\end{eqnarray*}
The result of Lemma \ref{joint1} can be rewritten as
\begin{eqnarray*}
\left(\frac{1}{n}\mathbf{\Sigma}_1\right)^{-1/2} \left( \begin{array}{c} X_n \\ Y_n \end{array} \right) \overset{\mathcal{D}}{\longrightarrow} \mathcal{N}\left(\mathbf{0}, \mathbf{I}_2\right).
\end{eqnarray*}
Let $a=c_n$, $b=\left(\log c_n-\gamma -\log2\right)$.
By definition, we have
\begin{eqnarray*}
\frac{1}{n}\sum_{i=1}^n \hat{\varepsilon}_i^2 =a +\frac{1}{\sqrt{n}}X_n, \quad \frac{1}{n}\sum_{i=1}^n \log \hat{\varepsilon}_i^2 = b +\frac{1}{\sqrt{n}}Y_n. 
\end{eqnarray*}
Then, the statistic $T_1$ can be rewritten as
\begin{eqnarray*}
T_1=a \exp\left(-b\right) \left[ 1+\frac{1}{a\sqrt{n}}X_n -\frac{1}{\sqrt{n}}Y_n +O_p\left(\frac{1}{n}\right) \right].
\end{eqnarray*}
And
\begin{eqnarray*}
\sqrt{n}\cdot T_1= \sqrt{n}\log (a\exp(-b)) +\frac{1}{a}X_n-Y_n +O_p\left(\frac{1}{\sqrt{n}}\right).
\end{eqnarray*}
Therefore, $\sqrt{n} T_1$ is asymptotic Gaussian, and its limiting parameters are
\begin{eqnarray*}
E\left( \sqrt{n} T_1 \right)= \sqrt{n} (\gamma +\log 2) + o(\sqrt{n}), \textrm{and} \ var\left( \sqrt{n} T_1 \right)=\frac{\pi^2}{2} -2 +o(1).
\end{eqnarray*}
The proof of Theorem \ref{T1} is complete.

\subsection{Proof of Lemma \ref{joint2}}
Recall that $\hat{\boldsymbol{\varepsilon}}=\mathbf{U}\mathbf{Z}$ is
distributed as a degenerated $p$-dimensional Gaussian vector of rank
$k=n-p$. By standard central limit theory  $\left( \sum_{i=1}^n \hat{\varepsilon}_i^4,
  \sum_{i=1}^n \hat{\varepsilon}_i^2 \right)$ is asymptotic Gaussian
up to suitable centering and normalization when $k\to \infty$. It
remains to determine its limiting mean and variance-covariances. 

According to (\ref{error_h3}) $\hat{\varepsilon}_i=\sum_{j=1}^k v_{ij} z_j, 1\leq i \leq n$, the expectation and variance of $\sum_{i=1}^n \hat{\varepsilon}_i^4$ are expanded in terms of $\{v_{ij}\}$ and $\{z_j\}$.
First, the expectation of $\sum_{i=1}^n \hat{\varepsilon}_i^4$ is calculated as
\begin{eqnarray*}
E\left(\sum_{i=1}^n \hat{\varepsilon}_i^4\right)
&=& n E\left( \sum_{j_1, j_2, j_3, j_4=1}^k v_{1j_1} v_{1j_2} v_{1j_3} v_{1j_4} z_{j_1} z_{j_2} z_{j_3} z_{j_4}\right)\nonumber\\[1mm]
&=& n \left[ 3kE \left(v_{11}^4\right)+k(k-1) E\left(v_{11}^2 v_{12}^2\right) \right],
\end{eqnarray*}
and by the moment identities (\ref{EMP1}) and (\ref{EMP2}), we obtain
\begin{eqnarray}
M_1
= E\left(\sum_{i=1}^n \hat{\varepsilon}_i^4\right)
= \frac{3k(k+2)}{n+2}.
\end{eqnarray}
Second, the variance of $\sum_{i=1}^n \hat{\varepsilon}_i^4$ is calculated as
\begin{eqnarray}\label{var_4}
var\left(\sum_{i=1}^n \hat{\varepsilon}_i^4\right)
&=& E \left(\sum_{i=1}^n \hat{\varepsilon}_i^4\right)^2
- E^2 \left(\sum_{i=1}^n \hat{\varepsilon}_i^4\right)\nonumber\\[2mm]
&=& n E \left(\hat{\varepsilon}_1^8\right) + n (n-1) E\left(\hat{\varepsilon}_1^4 \hat{\varepsilon}_2^4\right) -M_1^2,
\end{eqnarray}
where
\begin{eqnarray*}
E \left(\hat{\varepsilon}_1^8\right)
&=& 105 k E\left( v_{11}^8 \right) + 420 k (k-1) E\left( v_{11}^6 v_{12}^2\right)\nonumber\\
&& +315 k (k-1) E\left( v_{11}^4 v_{12}^4\right)
+630 k(k-1)(k-2) E\left(v_{11}^4 v_{12}^2 v_{13}^2\right)\nonumber \\
&&+ 105 k(k-1)(k-2)(k-3) E\left( v_{11}^2 v_{12}^2 v_{13}^2 v_{14}^2 \right),
\end{eqnarray*}
and by the moment identities (\ref{EMP1}), (\ref{EMP2}), (\ref{EMP4}) and (\ref{EMP6}), we obtain
\begin{eqnarray}\label{8th}
E \left(\hat{\varepsilon}_1^8\right)&=&\Big[ 105 k^4 \left( n^3 -3n^2-4n-60 \right) + 1260 k^3 \left( n^3-3n^2+8n+12 \right) \nonumber \\
&&+ 2520 k \left( 2n^3-3n^2 -11n +24 \right) +420 k^2 \left( 11n^3-51n^2-62n+150 \right) \Big]\nonumber \\
&&\times \left[n(n-2)(n-3)(n+2)^2(n+4)(n+6)\right]^{-1};
\end{eqnarray}
and by the moment identities (\ref{EMP2}), (\ref{EMP3}), (\ref{EMP5}), (\ref{EMP7}) and (\ref{EMP8}), we obtain
\begin{eqnarray}\label{inter44}
E\left(\hat{\varepsilon}_1^4 \hat{\varepsilon}_2^4\right)&=& \Big[ 9 k^4 \left( n^4 +5n^3 -10n^2-44n +120 \right) + 108 k^3 \left( n^4 +3n^3 -10 n^2  \right) \nonumber\\
&&\quad +108k^3 (-44n+88)+ 36 k^2 \left( 11n^4 -5n^3 +16n^2 -334n +384 \right) \nonumber\\
&&\quad +72 k \left( 6n^4 -77n^3 +157n^2 +116n -304 \right) \Big]\nonumber\\
&& \times \left[ n(n-1)(n^2-4)^2(n+4)(n+6) \right]^{-1}.
\end{eqnarray}
Then, by substituting equations (\ref{8th}) and (\ref{inter44}) into (\ref{var_4}), the variance of $\sum_{i=1}^n \hat{\varepsilon}_i^4$ is
\begin{eqnarray}
var\left(\sum_{i=1}^n \hat{\varepsilon}_i^4\right)&=& \Big[ 24 k^4 \left( n^4 +10n^3 -121n^2+152n -78\right) \nonumber\\
&& \quad +72 k^3 \left( n^5 +7n^4 -50n^3 +384n^2 -1132n +1200\right) \nonumber\\
&&  \quad +24 k^2 \left( 15n^5 +89n^4 -751n^3 -3245n^2 +18394n-20514\right) \nonumber\\
&& \quad +72 k \left( 6n^5 -n^4 -63n^3+498n^2 -1882n+2208\right) \Big]\nonumber\\
&& \times \left[(n-3)(n^2-4)^2(n+4)(n+6)\right]^{-1}.
\end{eqnarray}
Lastly, the covariance of $\sum_{i=1}^n \hat{\varepsilon}_i^2$ and $\sum_{i=1}^n \hat{\varepsilon}_i^4$ is calculated as follows
\begin{eqnarray}\label{cov2}
cov\left(\sum_{i=1}^n \hat{\varepsilon}_i^2, \sum_{i=1}^n \hat{\varepsilon}_i^4\right) &=& E\left(\sum_{i=1}^n \hat{\varepsilon}_i^2\cdot \sum_{i=1}^n \hat{\varepsilon}_i^4\right) -E\left(\sum_{i=1}^n \hat{\varepsilon}_i^2\right) E\left(\sum_{i=1}^n \hat{\varepsilon}_i^4\right)\nonumber\\[2mm]
&=& n E\left( \hat{\varepsilon}_1^6\right) +n (n-1) E\left( \hat{\varepsilon}_1^4 \hat{\varepsilon}_2^2 \right)-kM_1,
\end{eqnarray}
where
\begin{eqnarray*}
E\left( \hat{\varepsilon}_1^6\right) &=& k E\left( v_{11}^6\right) E\left(z_1^6\right) + 15 k(k-1) E\left(v_{11}^4 v_{12}^2\right) E\left(z_1^4 z_2^2\right)\nonumber\\
 &&+ 15 k(k-1)(k-2) E\left(v_{11}^2 v_{12}^2 v_{13}^2\right) E\left( z_1^2 z_2^2 z_3^3\right),
\end{eqnarray*}
and by the moment identities (\ref{EMP1}), (\ref{EMP2}) and (\ref{EMP4}), we obtain
\begin{eqnarray}\label{6th}
E\left( \hat{\varepsilon}_1^6\right)= \frac{15 k^3 +90k^2+110k}{n(n+2)(n+4)},
\end{eqnarray}
and
\begin{eqnarray*}
E\left( \hat{\varepsilon}_1^4 \hat{\varepsilon}_2^2 \right) &=& k E\left(v_{11}^4 v_{21}^2\right) E\left(z_1^6\right) + k(k-1) E\left( v_{11}^4 v_{22}^2\right) E\left( z_1^4 z_2^2\right) \nonumber\\
&& + 6k(k-1) E\left( v_{11}^2 v_{12}^2 v_{22}^2\right) E\left(z_1^4 z_2^2\right)
 + 8k(k-1) E\left( v_{11}^3 v_{12} v_{21} v_{22}\right) E\left(z_1^4 z_2^2\right)\nonumber \\
&& + 3k (k-1)(k-2) E\left(v_{11}^2 v_{12}^2 v_{23}^2\right) E\left( z_1^2 z_2^2 z_3^2\right) \nonumber\\
&& + 12 k(k-1)(k-2) E\left( v_{11}^2 v_{12} v_{13} v_{22} v_{23}\right) E\left( z_1^2 z_2^2 z_3^2\right),
\end{eqnarray*}
and by the moment identities (\ref{EMP3}), (\ref{EMP5}), (\ref{EMP6}) and (\ref{EMP8}), we obtain
\begin{eqnarray}\label{inter42}
E\left( \hat{\varepsilon}_1^4 \hat{\varepsilon}_2^2 \right)=\frac{3n k^3 -3k^3 +18nk^2 -18k^2}{n(n-1)(n+2)(n+4)}.
\end{eqnarray}
Then, we get the covariance by substituting (\ref{6th}) and (\ref{inter42}) into (\ref{cov2})
\begin{eqnarray}
cov\left(\sum_{i=1}^n \hat{\varepsilon}_i^2, \sum_{i=1}^n \hat{\varepsilon}_i^4\right) = \frac{12 k^2 (n+4) +110k}{(n+2)(n+4)}.
\end{eqnarray}
The proof of Lemma \ref{joint2} is complete.

\subsection{Proof of Theorem \ref{T2}}
The result of Lemma \ref{joint2} can be rewritten as
\begin{eqnarray*}
\left(\frac{1}{n}\mathbf{\Sigma}_2\right)^{-1/2} \cdot \sqrt{n}\left( \begin{array}{ccc} \frac{1}{n}\sum_{i=1}^n \hat{\varepsilon}_i^4 & - & \frac{3c_n(k+2)}{(n+2)}\\ \frac{1}{n}\sum_{i=1}^n \hat{\varepsilon}_i^2 & - & c_n\end{array}\right) \overset{\mathcal{D}}{\longrightarrow} \mathcal{N}\left( \mathbf{0}, \mathbf{I}_2\right).
\end{eqnarray*}
Due to the statistic $T_2$ can be rewritten as
\begin{eqnarray*}
T_2=\frac{\frac{1}{n}\sum_{i=1}^n \hat{\varepsilon}_i^4}{\left(\frac{1}{n}\sum_{i=1}^n \hat{\varepsilon}_i^2\right)^2}-1,
\end{eqnarray*}
define a function $f(x, y)=\frac{x}{y^2}-1$, then $T_2 = f(n^{-1}\sum_{i=1}^n \hat{\varepsilon}_i^4, n^{-1}\sum_{i=1}^n \hat{\varepsilon}_i^2)$. Let $\theta_1=\frac{3c_n(k+2)}{(n+2)}$ and $\theta_2=c_n$. Using delta method, $T_2$ is asymptotic Gaussian and we can obtain its limiting expectation and variance as follows. The expectation is
\begin{eqnarray*}
E(T_2)=f(\theta_1, \theta_2) = \frac{2+6/k -2/n}{1+2/n},
\end{eqnarray*}
and
\begin{eqnarray}
\lim_{k, n \to \infty}E(T_2)\to 2.
\end{eqnarray}
And the variance of $T_2$ is
\begin{eqnarray*}
var(T_2)=\nabla f \cdot \left(\frac{1}{n}\mathbf{\Sigma}_2\right) \nabla f^\prime,
\end{eqnarray*}
where $\nabla f=\left( f_x^\prime (\theta_1, \theta_2) \quad  f_y^\prime (\theta_1, \theta_2)\right)$
is the first order differential vector with $f_x^\prime (\theta_1, \theta_2) = \frac{1}{c_n^2}$, $f_y^\prime (\theta_1, \theta_2) = -6 \frac{(k+2)}{c_n^2(n+2)}$.
Finally, the variance is
\begin{eqnarray*}
var(T_2)= 24 + \frac{288}{k}+\frac{360}{c_nk}+O\left(\frac{1}{k^2}\right)+O\left(\frac{1}{n^2}\right),
\end{eqnarray*}
and
\begin{eqnarray}
\lim_{k, n\to \infty}var(T_2)\to 24.
\end{eqnarray}
The proof of Theorem \ref{T2} is complete.

\end{document}